\definecolor{DarkGreen}{rgb}{0.1,0.5,0.1}
\definecolor{DarkRed}{rgb}{0.5,0.1,0.1}
\definecolor{DarkBlue}{rgb}{0.1,0.1,0.5} 
\newtheorem{theorem}{Theorem} 
\newtheorem{lemma}[theorem]{Lemma} 
\newtheorem{corollary}[theorem]{Corollary}
\newtheorem{proposition}[theorem]{Proposition}
\theoremstyle{definition} 
\newtheorem{example}[theorem]{Example}
\numberwithin{equation}{section}
\def\>{\rangle} 
\def\<{\langle}
\newcommand{\vzero}[1]{{\bf 0}_{#1} }
\DeclareMathOperator{\spann}{span}
\DeclareMathOperator{\wt}{wt}
\title{Permutation-invariant constant-excitation quantum codes for amplitude damping}
\author{Yingkai Ouyang$^1$ \and Rui Chao$^2$}
\date{%
    $^1$ University of Sheffield, UK\\%
    $^2$ University of Southern California, USA\\[2ex]%
    \today
}
\begin{document}

\maketitle 

\abstract{
 The increasing interest in using quantum error correcting codes in practical devices has heightened the need for designing quantum error correcting codes that can correct against specialized errors, such as that of amplitude damping errors which model photon loss.
 Although considerable research has been devoted to quantum error correcting codes for amplitude damping, not so much attention has been paid to having these codes 
 simultaneously lie within the decoherence free subspace of their underlying physical system.
 One common physical system comprises of quantum harmonic oscillators, 
 and constant-excitation quantum codes can be naturally stabilized within them.
 The purpose of this paper is to give constant-excitation quantum codes that not only correct amplitude damping errors, but are also immune against permutations of their underlying modes.
To construct such quantum codes, we use the nullspace of a specially constructed matrix based on integer partitions.
}

\section{Introduction}
 
The ability to manipulate quantum information promises to speed up algorithms such as factoring \cite{shor1994algorithms,shor1999polynomial},
simulate physical systems more efficiently \cite{lloyd1996universal}, 
and unlock the ability to perform cryptographic schemes with unprecedented security \cite{BB84,Eke91}.
However the inherent fragility of quantum information is a major obstacle in realizing the full potential of these quantum schemes.
To overcome this, one may rely on quantum error correction codes, which offer the possibility of reversing the effects of decoherence \cite{KnL97}. 
However, if an arbitrary quantum error correction code were used in a physical system,
it is invariably affected by the underlying system's natural dynamics.
Therein lies the allure of constructing quantum error correction codes within an energy eigenspace of the physical system's underlying Hamiltonian, because errors may then be avoided at a fundamental level.

We consider here the problem of quantum error correction in quantum harmonic oscillators, which, as we shall see, is not just of theoretical interest.
In recent years, superconducting qubits have been extensively studied, and are considered as one of the leading candidates for realizing quantum information in a physical system.
In the superconducting electrical circuits that superconducting qubits are based on, each superconducting qubit is localized around a cluster of Josephson junctions, 
and can interact with other spatially separated superconducting qubits when coupled with microwave-frequency photons in a quantum bus \cite{blais2004cavity,majer2007coupling,gu2017microwave}.
It is well-known that this quantum bus is in turn just a microwave-frequency electrical transmission line \cite{gu2017microwave}[Appendix A.3], with a Hamiltonian 
described by a sum of quantum harmonic oscillators.

If we restrict our attention to using photons of identical frequencies within a quantum bus, 
its Hamiltonian is up to a constant effectively given by $H = \sum_{j}a_j^\dagger a_j$.
Here, $a_j$ denotes the lowering operator for the $j$-th mode.
Now let $|x_1\> \otimes \dots \otimes |x_n\>$ denote a quantum state with $x_j$ excitations in the $j$-th mode,
and let $x_1+ \dots + x_n$ denote the total excitation number of such a state. 
Then the eigenspaces of $H$ are spanned by states $|x_1\> \otimes  \dots \otimes |x_n\>$ with a constant total excitation number. 
If a quantum code is spanned by states with a constant total excitation number, 
we call it a constant-excitation quantum code. 
In this paper, we design constant-excitation quantum codes which can be stabilized by the Hamiltonian of quantum harmonic oscillators.

We consider two common types of errors that may afflict our physical system modeled by quantum harmonic oscillators, particularly in a quantum bus. 
The first type of errors are amplitude damping (AD) errors, which can arise 
when the system weakly interacts with a zero temperature bosonic and Markovian bath.
The second type of errors are permutation errors, 
which may arise especially during transmission when modes are unexpectedly permuted either spatially or temporally. 
In this paper, we consider quantum error correction codes that offer protection against not only amplitude damping errors but also permutation errors.

Amplitude damping errors model energy relaxation in quantum harmonic oscillator systems and photon loss in photonic systems. In this paper, we will consider amplitude damping errors that occur independently and identically on every mode.
To see how amplitude damping errors may arise in quantum harmonic oscillators from the underlying physics, consider the coupling Hamiltonian on the $j$-th mode given by 
\[
H_{{\rm int},j} = \chi_j (a_j ^\dagger b_j +b_j^\dagger a_j).
\]
Here, $b_j$ is the lowering operator of the bath that couples to the $j$-th mode,
and so $H_{{\rm int},j}$ couples each unique quantum harmonic oscillator to a unique bath, thereby ensuring that the amplitude damping errors occur independently for each mode.
There are two reasons why we do not consider couplings between the harmonic oscillators. First the Hamiltonian of an ideal transmission line naturally comprises of uncoupled harmonic oscillators. Second, even if there is some spurious linear coupling between the harmonic oscillators, it has later been shown that dynamical decoupling can homogenize the system, and render the effective Hamiltonian to be that of a sum of identical uncoupled harmonic oscillators \cite{heinze2018universal}.
In this paper, we assume that AD errors afflict every mode identically, which can be the case when the coupling strength $\chi_j$ is independent of $j$, so that $\chi_j = \chi$.
By assuming that the system and the bath are initially in a product state, and subsequently applying a Born-Markov approximation \footnote{See \cite{leung-thesis} for a detailed exposition.}, one can show that the noise process can be modeled using the Kraus operators
\begin{align}
A_k = \sum_{m=k}^\infty \sqrt{\binom{m}{k}}\sqrt{(1-\gamma)^{m-k} \gamma^k} |m-k\>\<m|,
\end{align}
where $k$ indicates the number of AD errors that afflict a mode, and $\gamma = 1-\cos^2(\chi  \Delta t)$ is the strength of the AD error that corresponds to time $\Delta t$ \cite{CLY97}.

Permutation errors model the stochastic reordering and coherent exchange of quantum packets as well as out-of-order delivery of packets of information, which plausibly occur due to imperfections in a communication channel \cite{ouyang2015permutation}.
More precisely, we denote a permutation channel to be a quantum channel with each of its Kraus operators $P_\alpha$ proportional to $e^{i \theta_\alpha \hat \pi_\alpha}=\sum_{k \ge 0} (i \theta_\alpha \hat \pi_\alpha)^k/k!$, where $\theta_\alpha$ is the parameter corresponding to the infinitesimal generator $i \hat \pi_\alpha$ \footnote{The matrix $i \hat \pi_\alpha$ is called an infinitesimal generator because it is a bounded operator, and generates the unitary matrix $e^{i \theta_\alpha \hat \pi_\alpha}$ in the sense consistent with \cite[Theorem 1.2]{pazy2012semigroups}. }, and $\hat \pi_\alpha$,  is any linear combination of operators that permute the underlying modes with real coefficients.
We also emphasize that we are interested in the scenario where $\theta_\alpha$ takes an arbitrary value from the real numbers, so it need not be small.   

Amplitude damping errors are prevalent in bosonic systems.
If a single bosonic mode were left unprotected against AD errors, the incurred error as quantified by one minus the fidelity is of order $\gamma$.
If a quantum error correction code allows reduction of the order of this error to $\gamma^{t+1}$, we say that the quantum code corrects $t$ AD errors.
Unsurprisingly, there has been extensive work on quantum error correction codes specialized against correcting amplitude damping errors \cite{LNCY97,CLY97,cochrane1999macroscopically,SSSZ09, DGJZ10,BinomialCodes2016,jackson2016concatenated,li2017cat,grassl2018quantum}.
Of special note are some previously constructed constant-excitation quantum codes that do offer immunity against the natural dynamics of quantum harmonic oscillators \cite{CLY97,WaB07,BvL16}.
Chuang, Leung and Yamamoto restricted their study of bosonic quantum codes to constant-excitation quantum codes, 
and found that the fidelity after quantum error correction for such codes that correct $t$ AD errors with total excitation number $N$ can be made to be 
$\sum_{k=0}^t \binom N k \gamma^k (1-\gamma)^{N-k} = 1 - \binom{N}{t+1} \gamma^{t+1} + O(\gamma^{t+2})$ \cite{CLY97}.
Hence for constant-excitation quantum codes, minimizing $N$ for fixed $t$ is the primary goal. 
In their paper \cite{CLY97}, Chuang, Leung and Yamamoto 
also found constant-excitation quantum codes correcting 1, 2 and 3 AD errors with total excitation numbers equal to 4, 9 and 16 respectively.
Wasilewski and Banaszek later introduced a constant-excitation quantum code with $N=3$ correcting 1 AD error \cite{WaB07}, thereby improving on the construction of the $N=4$ code in \cite{CLY97}. 
The code of Wasilewski and Banaszek is also notably the first known constant-excitation quantum code that not only corrects amplitude damping errors, but is also permutation invariant.
Recently, Bergmann and van Loock found constant-excitation quantum codes that can correct any number of AD errors \cite{BvL16}. Namely, their codes can correct $t$ AD errors using $N=(t+1)^2$ excitations and are very elegant in the sense that these codes can be encoded simply by using NOON states and beamsplitters.
Unfortunately the codes of Bergmann and van Loock are not invariant under arbitrary permutations, and are hence vulnerable to certain permutation errors.

  Apart from specialized quantum codes that correct amplitude damping errors, 
  permutation-invariant quantum codes have also been studied in recent years, 
  both with respect to arbitrary errors \cite{Rus00,PoR04,ouyang2014permutation,OUYANG201743} and also amplitude damping errors \cite{WaB07,ouyang2014permutation,ouyang2015permutation}.
  Permutation-invariant quantum codes are important because they are inherently immune to permutation errors.
After Ruskai first introduced a 9 qubit permutation-invariant quantum code correcting one arbitrary error \cite{Rus00},
Pollatsek and Ruskai later improved this in \cite{PoR04} with a 7 qubit permutation-invariant code that corrects one arbitrary error. 
Later in \cite{ouyang2014permutation}, Ruskai's 9 qubit permutation-invariant quantum code was generalized to yield permutation-invariant quantum codes correcting $t$ arbitrary errors or $t$ AD errors while encoding a single qubit. 
  In \cite{ouyang2015permutation} and \cite{OUYANG201743}, the permutation-invariant codes were generalized in different directions to allow the correction of 1 AD error and encoding of a qudit, and correction of arbitrary errors and encoding of a qudit respectively.  
  However, aside from Wasilewski and Banaszek's quantum code, none of these permutation-invariant codes are also constant-excitation quantum codes.
 
The purpose of this paper is to construct constant-excitation quantum codes that not only correct any $t$ AD errors, but are also permutation-invariant (PI). 
Using the techniques from linear algebra and by counting the sizes of integer partitions, we construct PI constant-excitation quantum codes that correct $t$ AD errors for any integer $t$.
For our codes, the total number of modes used is equal to the total excitation number, so $n=N$.
For example, when the total excitation number $N$ satisfies the following inequality,
\begin{align}
p\left(\frac N{t+1} \right) + \binom t 2 \ge  p(1) + \dots + p(t) \label{eq:inequality}
\end{align}
where $p(t)$ denotes the number of integer partitions of $t$,
there are corresponding PI constant-excitation quantum codes that correct $t$ AD errors.
The inequality in (\ref{eq:inequality}) allows us to easily find code parameters for PI constant-excitation quantum codes.

Among the PI constant-excitation quantum codes that we construct, 
we have codes that correct 2,3,4 and 5 AD errors using 6,12, 20 and 30 total excitations respectively. 
These codes are given explicitly in Example \ref{example-t=1}, Example \ref{example-t=2}, Example \ref{example-t=3,N=12}, Example \ref{example-t=4,N=20} and Example \ref{example-t=5,N=30} respectively.
We wish to emphasize that these codes are not only permutation-invariant, 
but also have lower total excitation numbers than the constant-excitation codes of Bergmann and van Loock, which require 9,16,25 and 36 total excitations respectively.
In this sense, for small values of $t$, our constructed codes give the best performance in terms of fidelity among the constant-excitation quantum codes.
Moreover for large values of $t$, we numerically find that our constructed PI constant-excitation quantum codes that correct $t$ AD errors have total excitations $N = C (t+1)^2$ where $C$ is slightly larger than one (see Figure \ref{fig:bounds}).
This suggests that our code parameters are asymptotically similar to those of Bergmann and van Loock. 
The value of this result lies in the fact that 
permutation-invariance can be imbued to constant-excitation quantum codes while minimally affecting their output fidelities.

In this paper, we construct our PI constant-excitation quantum code using 
partitions of a well-chosen integer and a real vector given explicitly in (\ref{eq:code-defi}). 
Independently, we define a matrix $A$ in (\ref{eq:aij}) that depends only on the partitions that label the AD errors that are to be corrected and the partitions that label the permutation-invariant states of constant-excitation that our code is to be supported on.
We prove in Theorem \ref{thm:nullity-2-code} that any non-trivial solution to the linear system of equations $A{\bf x} = {\bf 0}$ leads to a PI constant-excitation quantum code.
Intuitively, the matrix $A$ quantifies the extent in which AD errors, after acting on Dicke states, can shrink their norms.
By obtaining a lower bound on the nullity of $A$, we prove (\ref{eq:inequality}) in
Corollary \ref{coro}.
 
%%%%%%%%%%%%%%%%%%%%%%%%%%%%%%%%%%%%%%%%%%%%%%%%%%%%%%
%%%%%%%%%%%%%%%%%%%%%%%%%%%%%%%%%%%%%%%%%%%%%%%%%%%%%%
%%%%%%%%%%%%%%%%%%%%%%%%%%%%%%%%%%%%%%%%%%%%%%%%%%%%%%
\section{Preliminaries and notation}\label{sec:notations}
%%%%%%%%%%%%%%%%%%%%%%%%%%%%%%%%%%%%%%%%%%%%%%%%%%%%%%
%%%%%%%%%%%%%%%%%%%%%%%%%%%%%%%%%%%%%%%%%%%%%%%%%%%%%%
We begin by introducing terminology related to vectors of non-negative integers.
First define $\mathbb N$ to be the set of non-negative integers and let $n$ be a positive integer denoting the number of modes and the total excitation number that will be used for the quantum code. 
For any integer $a$ and non-negative integer $b$, let $a_{(b)} = (a)(a-1) ... (a-b+1)$ denote the falling factorial symbol.
Here, $(a)_{(0)}=1$.
Let $(y_1, \dots, y_n) $ denote a column vector and $(y_1, \dots, y_n) ^T$ denote a row vector.
Define ${\bf 1}_u$ and ${\bf 0}_u$ as column vectors of length $u$ with all components equal to 1 and 0 respectively. 
For ${\bf y} = (y_1, \dots, y_n)$, let $\wt({\bf y}) = y_1 + \dots + y_n$ denote the weight of ${\bf y}$.
For non-negative integers $t$ such that $0 \le t\le n$, 
let $\mathcal K_{n,t} = \{ (y_1, \dots, y_n) \in \mathbb N^n: y_1 + \dots + y_n = t  \}$
denote the set of non-negative vectors of weight $t$.
Also define $\overline {\mathcal K}_{n,t} = \mathcal K_{n,0} \cup \dots \cup \mathcal K_{n,t}$
to be the set of non-negative vectors with weights from 0 to $t$.

We now introduce terminology related to the constant-excitaton quantum codes that we will study.
Let the orthonormal vectors $|j\>$ for $j \in \mathbb N $ span the Hilbert space of a single bosonic mode, which we denote as $\mathcal H$.  
The quantum codes that we consider in this paper are two-dimensional subspaces of the $n$-mode Hilbert space $\mathcal H_n$.
Given a vector ${\bf y} = (y_1, \dots, y_n) \in \mathbb N^n$, 
define the computational basis state $|{\bf y}\> = |y_1\> \otimes \dots \otimes |y_n\> \in \mathcal H_n$.
The weight, or a total excitation number of a computational basis state $|{\bf y}\>$ is the weight of 
${\bf y}$.
We say that a quantum code is also a constant-excitation quantum code if it 
can be spanned by linear combinations of states with a constant total excitation number.

In this paper, we deal with the matrices $A_k ^\dagger A_k$ repeatedly, and hence we evaluate them first.
%%%%%%%%%%%%%%%%% PROPOSITION  
\begin{proposition}
\label{prop:A-dag-A}
For all non-negative integers $k$, we have $ A_k^\dagger A_k = \sum_{j = k}^\infty \binom{j}{k} (1-\gamma )^{j-k}\gamma^k |j\>\<j|$.
\end{proposition}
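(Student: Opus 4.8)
The plan is to verify the identity by a direct term-by-term computation starting from the defining formula for $A_k$. The first step is to take the Hermitian adjoint. Since
\[
A_k = \sum_{m=k}^\infty \sqrt{\binom{m}{k}}\,\sqrt{(1-\gamma)^{m-k}\gamma^k}\,|m-k\rangle\langle m|,
\]
and each coefficient $\sqrt{\binom{m}{k}(1-\gamma)^{m-k}\gamma^k}$ is a non-negative real number, conjugation leaves the coefficients untouched and merely replaces each rank-one term $|m-k\rangle\langle m|$ by its transpose $|m\rangle\langle m-k|$. Hence $A_k^\dagger = \sum_{m=k}^\infty \sqrt{\binom{m}{k}(1-\gamma)^{m-k}\gamma^k}\,|m\rangle\langle m-k|$.

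Next I would form the product $A_k^\dagger A_k$ as a double sum, using one index $m$ for the adjoint factor and a second index $m'$ for $A_k$, and then exploit the orthonormality of the number states $\{|j\rangle\}_{j \in \mathbb N}$. The crucial simplification is that the inner product appearing in the middle, $\langle m-k | m'-k\rangle$, equals $\delta_{m,m'}$; this collapses the double sum onto its diagonal $m=m'$ and eliminates all cross terms. What survives is $\sum_{m=k}^\infty \left(\sqrt{\binom{m}{k}(1-\gamma)^{m-k}\gamma^k}\right)^2 |m\rangle\langle m|$, and squaring the square-root coefficient recovers $\binom{m}{k}(1-\gamma)^{m-k}\gamma^k$ exactly. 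Relabeling the summation index $m \to j$ then yields the claimed expression.

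The only care required is routine index bookkeeping: one must keep the summation starting at $m=k$ so that each lowered state $|m-k\rangle$ is a legitimate (non-negative) number state, and one must track which factor carries the bra versus the ket so that the orthogonality relation is applied to the correct pair of states. There is no genuine analytic obstacle here; convergence is not an issue because the identity is understood as an equality of operators diagonal in the number basis, acting coefficient-wise. The statement is essentially a bookkeeping lemma, recorded in this form precisely because the diagonal operators $A_k^\dagger A_k$ will be invoked repeatedly when imposing the Knill–Laflamme conditions on the code.
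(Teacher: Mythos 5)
Your computation is correct: the paper states this proposition without proof, treating it as an immediate evaluation, and your direct argument (take the adjoint, form the double sum, use $\langle m-k|m'-k\rangle=\delta_{m,m'}$ to collapse to the diagonal, and square the coefficients) is exactly the calculation being implicitly invoked. Nothing is missing, and the index bookkeeping starting the sum at $m=k$ is handled properly.
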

%%%%%%%%%%%%%%%%% END PROPOSITION 
We now require notation for representing AD errors that occur on $n$ modes. Given a vector
${\bf k} = (k_1,\dots , k_n)  \in \mathcal K_{n,\kappa} $, 
let $A_{\bf k} = A_{k_1} \otimes \dots \otimes A_{k_n}$.
We say that $A_{\bf k} $ has a weight of $\kappa$.
We can then find that the diagonal matrix elements of 
$\sum_{{\bf k} \in \mathcal K_{n,\kappa}}  A_{\bf k} ^\dagger A_{\bf k}$ in the computational basis 
$|x_1\> \otimes \dots \otimes |x_n\>$ only depends on $\kappa$ and $x_1 +\dots + x_n$. 
The following proposition, which essentially follows the same logic as the equations from (7.6) to (7.11) in \cite{CLY97}, makes this precise.
%%%%%%%%%%%%%% Proposition
\begin{proposition} \label{prop:const}
Let ${\bf x} =(x_1,\dots, x_n)$ be a vector of non-negative integers, and let $\chi = x_1 + \dots +x_n$. 
Then
\begin{align}
\sum_{\substack{
{\bf k}  \in  \mathcal K_{n,\kappa} \\ 
}}
 \< {\bf x} | A_{{\bf k}} ^\dagger A_{{\bf k}}  | {\bf x} \>
=
  (1-\gamma )^{\chi-\kappa}\gamma^{\kappa} 
   \binom{ x_1 + \dots + x_n }{\kappa} .
   \notag
\end{align}
\end{proposition}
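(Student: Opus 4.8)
The plan is to exploit the tensor-product structure of $A_{\bf k}$ to reduce the $n$-mode expectation value to a product of single-mode expectation values, and then to collapse the sum over ${\bf k}$ by a Vandermonde-type convolution. First I would note that since $A_{\bf k} = A_{k_1} \otimes \dots \otimes A_{k_n}$, we have $A_{\bf k}^\dagger A_{\bf k} = (A_{k_1}^\dagger A_{k_1}) \otimes \dots \otimes (A_{k_n}^\dagger A_{k_n})$, so the expectation value on the product state $|{\bf x}\> = |x_1\> \otimes \dots \otimes |x_n\>$ factorizes as
\begin{align}
\< {\bf x} | A_{\bf k}^\dagger A_{\bf k} | {\bf x} \> = \prod_{i=1}^n \< x_i | A_{k_i}^\dagger A_{k_i} | x_i \>. \notag
\end{align}

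Next I would apply Proposition \ref{prop:A-dag-A} to each single-mode factor. Because $A_{k_i}^\dagger A_{k_i}$ is diagonal in the computational basis and its spectral sum starts at $|k_i\>\<k_i|$, the $|x_i\>$-diagonal entry equals $\binom{x_i}{k_i}(1-\gamma)^{x_i-k_i}\gamma^{k_i}$ when $x_i \ge k_i$ and vanishes otherwise. Adopting the standard convention $\binom{x_i}{k_i}=0$ for $k_i > x_i$ lets me write each factor uniformly, so that
\begin{align}
\< {\bf x} | A_{\bf k}^\dagger A_{\bf k} | {\bf x} \> = \prod_{i=1}^n \binom{x_i}{k_i}(1-\gamma)^{x_i-k_i}\gamma^{k_i}, \notag
\end{align}
where the whole product is zero unless $k_i \le x_i$ for every $i$, so that all exponents of $(1-\gamma)$ stay non-negative on the support.

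I would then sum over ${\bf k} \in \mathcal K_{n,\kappa}$. Using the defining constraint $k_1 + \dots + k_n = \kappa$ together with $x_1 + \dots + x_n = \chi$, the powers of $\gamma$ and $1-\gamma$ are constant across the sum, since $\prod_i \gamma^{k_i} = \gamma^\kappa$ and $\prod_i (1-\gamma)^{x_i-k_i} = (1-\gamma)^{\chi-\kappa}$. Factoring these out reduces the claim to evaluating the residual sum $\sum_{{\bf k} \in \mathcal K_{n,\kappa}} \prod_{i=1}^n \binom{x_i}{k_i}$.

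The one genuinely non-mechanical step is evaluating this residual sum, for which I would invoke the $n$-fold Chu--Vandermonde convolution $\sum_{k_1 + \dots + k_n = \kappa} \binom{x_1}{k_1} \cdots \binom{x_n}{k_n} = \binom{x_1 + \dots + x_n}{\kappa}$. This follows by comparing the coefficient of $z^\kappa$ in $\prod_{i=1}^n (1+z)^{x_i}$ with that in $(1+z)^{x_1 + \dots + x_n}$, and the vanishing convention on the binomials makes the index set $\mathcal K_{n,\kappa}$ agree exactly with the terms appearing in this identity. Combining the residual value $\binom{\chi}{\kappa}$ with the extracted prefactor yields $(1-\gamma)^{\chi-\kappa}\gamma^{\kappa} \binom{x_1 + \dots + x_n}{\kappa}$, as claimed.
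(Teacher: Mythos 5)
Your proposal is correct and follows essentially the same route as the paper's proof: factorize the expectation value over modes, apply Proposition \ref{prop:A-dag-A} to each single-mode factor (with the convention that $\binom{x_i}{k_i}=0$ for $k_i>x_i$), pull out the constant powers of $\gamma$ and $1-\gamma$, and finish with the Chu--Vandermonde convolution proved by generating functions. No substantive differences.
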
 
%%%%%%%%%%%%%% End Proposition
%%%%%%%%%%%%%%%%% PROOF  
\begin{proof}
Obviously, $\< {\bf x} | A_{{\bf k}} ^\dagger A_{{\bf k}}  | {\bf x} \>=  \<x_1 | A_{k_1}^\dagger  A_{k_1} | x_1 \> \dots \<x_n | A_{k_n}^\dagger  A_{k_n} | x_n \>  $.
Using Proposition \ref{prop:A-dag-A}, we get
\begin{align}
\< {\bf x} | A_{{\bf k}} ^\dagger A_{{\bf k}}  | {\bf x} \>=
\prod_{i=1}^n \binom{x_i}{k_i} (1-\gamma )^{x_i-k_i}\gamma^{k_i}  
= 
(1-\gamma )^{\chi-\kappa}\gamma^{\kappa}  \prod_{i=1}^n \binom{x_i}{k_i} \notag.
\end{align}
Note that this equality holds even when $x_i < k_i$ for some $i$, because the above equality does show that $\< {\bf x} | A_{{\bf k}} ^\dagger A_{{\bf k}}  | {\bf x} \>$ is zero in this case, which is a simple consequence of the fact that $\binom {x_i}{k_i}  = (x_i)_{(k_i)}/ k_i! = (x_i) \dots (0) \dots (x_i-k_i+1) / k_i! = 0$ whenever $x_i < k_i$.
The result follows conditioned on the combinatorial identity
\begin{align}
\sum_{\substack{{\bf k}   \in \mathcal K_{n,\kappa} }}
\prod_{i =1}^n \binom{x_i}{k_i} = \binom{x_1+\dots +x_n}{\kappa}, \label{eq:combi-id}
\end{align}
which is stated in \cite[(7.10)]{CLY97} for example.
We can see that (\ref{eq:combi-id}) holds combinatorially, since the number of ways to select $\kappa$ balls from bags with $x_1, \dots, x_n$ balls is sum of the product of the number of ways to select $k_i$ balls from the $i$th bag, where the total number of selected balls is $\kappa$. 
Alternatively, we can also prove (\ref{eq:combi-id}) algebraically using the method of generating functions.
Given any formal power series $f(z) = \sum_{j\ge 0}f_j z^j$ where $z$ is indeterminate, let $[z^k]f(z) = f_k$.
Then we can write $\binom {x_i}{k_i} = [z_i^{k_i}](1+z_i)^{x_i}$ for indeterminates $z_i$.
Hence, we can rewrite the left hand side of (\ref{eq:combi-id}) as 
\begin{align}
\sum_{\substack{
{\bf k}   \in \mathcal K_{n,\kappa}  \\ 
}} 
[z_1^{k_1}](1+z_1)^{x_1} \dots  [z_n^{k_n}](1+z_n)^{x_n}
&=[z^\kappa](1+z_1)^{x_1} \dots (1+z_n)^{x_n} \notag\\
&=[z^\kappa](1+z_1)^{x_1+  \dots+ x_n} \notag\\
&=\binom {x_1 +\dots + x_n}{\kappa},\notag
\end{align}
which proves (\ref{eq:combi-id}).
\end{proof}
%%%%%%%%%%%%%%%%% END PROOF 
Because the quantum codes that we study are invariant under any permutation of the underlying modes, we proceed to define some permutation-invariant states that we will use to construct our quantum codes. 
To do so, we have to first introduce notation related to integer partitions.
Given a positive integer $n$, let $p(n)$ denote the number of its partitions, and denote every partition of $n$ as an $n$-tuple of non-increasing non-negative integers. For example with $5=2+2+1$, we denote the corresponding partition of 5 as $(2,2,1,0,0)$. 
We denote the set of partitions of $n$ with $P(n)$.  
Given tuples ${\bf x} = (x_1, \dots, x_a)$ and ${\bf y}  = (y_1,\dots, y_b),$
let $({\bf x} | {\bf y}) = (x_1,\dots x_a, y_1, \dots, y_b)$ denote the pasting of tuples ${\bf x}$ and ${\bf y}$.
For all positive integers $t$, let 
\begin{align}
\overline P(t) = \{ ({\bf q} | {\bf 0}_{t-k} ) : {\bf q} \in P(k) , k = 1,\dots, t-1 \}  \cup P(t),
\end{align}
and let
\begin{align}
\overline p(t) = p(1) + \dots + p(t) = |\overline P(t)|.
\end{align}
Given any positive integer $w$, let $Q$ be an arbitrary subset of partitions $P(w)$, {\em i.e} $Q \subseteq P(w)$.
The integers $u$ and $w$ are such that the number of modes $n$ satisfies the constraint $n=uw$.
Furthermore, given a positive integer $u$, let $Q_u$ be the set of vectors in $Q$ multiplied by $u$ component-wise and extended to length $uw$, and appended by a ones vector, so that we have
\begin{align}
Q_u = \{ ( u {\bf q}  | {\bf 0}_{uw-w}  ):   {\bf q}\in Q\}  \cup \{{\bf 1}_{n}\}. \label{eq:Qu-defi}
\end{align}
We will see later how our quantum codes can be constructed using linear combinations of states labeled by vectors in $Q_u$.
For example, the ones vector in $Q_u$ will correspond to the permutation-invariant state $|1\>^{\otimes uw}$.
We proceed to illustrate this definition with some examples.
\begin{enumerate}
\item When $Q =P(1) = \{ (1) \}$, we have $Q_3 = \{(3,0,0) , (1,1,1)\}$. 
The set $Q_3$ will later correspond to the permutation-invariant basis 
\begin{align}
\left \{ |1,1,1\> , \frac{|3,0,0\> + |0,3,0\> + |0,0,3\> }{\sqrt 3} \right\}.
\end{align}
\item When $Q = P(2) = \{(2,0),(1,1)\} $, 
we have
$Q_{3} = \{ (6,0,0,0,0,0), (3,3,0,0,0,0)\}$.
The set $Q_3$ will later correspond to the permutation-invariant basis $\{ |1,1,1,1,1,1\> , |\phi_1\>,|\phi_2\>\} $ where
\begin{align}
\sqrt 6|\phi_1\> &=  | (6| \vzero 5)\> + |(0,6| \vzero 4)\> + |(\vzero 2|6|\vzero 3)\> 
+|(\vzero 3|6|\vzero 2)\> + |(\vzero 4| 6,0)  \> + |(\vzero 5| 6) \>    \notag\\
\sqrt{15} |\phi_2 \> &= |(3,3|\vzero 4)\> + |(0,3,3|\vzero 3)\> + |(\vzero 2|3,3|\vzero 2)\> 
+|(\vzero 3|3,3,0)\> + |(\vzero 4 |3,3 ) \> + |(3|\vzero 4|3) \>   \notag\\
&\quad 
+| (3,0,3|\vzero 3) \>
+| (0,3,0,3|\vzero 2) \>
+| (\vzero 2 |3,0,3,0) \>
+| (\vzero 3|3,0,3 ) \>
+| (3|\vzero 3|3,0 ) \>
+| (0,3|\vzero 3|3 ) \>\notag\\
&\quad 
+| (3,0,0,3,0,0) \>
+| (0,3,0,0,3,0) \>
+| (0,0,3,0,0,3) \>.
\end{align}
\end{enumerate}
Given a partition ${\bf q}=(q_1, \dots, q_n)$ of a positive integer $n$, we first define $\widetilde{\bf q}$ as the set of all permutations of ${\bf q}$.
For any set of non-negative vectors $Q$, we define $\widetilde Q$ to be a union of all the sets $\widetilde{\bf q}$ for which ${\bf q} \in Q$.
Formally, denoting $S_n$ as the symmetric group of order $n$, we have
\begin{align}
\widetilde{\bf q} &= \{  (q_{\pi(1)}, \dots, q_{\pi(n)}: \pi \in S_n\},\\
\widetilde Q_u &= \bigcup_{q \in Q_u} \widetilde {\bf q}.\label{def:widetilde Qu}
\end{align}
For example,
\begin{align}
\widetilde{(1,1,1,0)}  &=  \{ (1,1,1,0), (1,1,0,1),(1,0,1,1),(0,1,1,1) \}   . \notag
\end{align} 
Next we define $|\widetilde{\bf q}\>$ as a uniform superposition over all permutations of the $n$-mode state 
$|q_1\> \otimes \dots \otimes |q_n\>$,
so that 
\begin{align}
|\widetilde{\bf q}\> = \frac{1}{\sqrt{ |\widetilde{\bf q} |  } } \sum_{ {\bf y} \in \widetilde{\bf q} } |{\bf y}\>.
\end{align}
As an example using this notation, 
\begin{align}
|\widetilde{(2,2,0,0)}\> &=  \frac{1}{\sqrt{6} }(
|(2,2,0,0)\> + |(0,2,2,0)\> + |(0,0,2,2)\> + |(2,0,0,2)\> +|(2,0,2,0)\> + |(0,2,0,2)\> )  .\notag
\end{align}
The codes that we consider lie within the span of these Dicke states $|\widetilde{\bf q}\>$ where ${\bf q}$ are partitions of $n$. 
In particular, $  \spann \{ |\widetilde{\bf q}\> : {\bf q} \in P(n) \}$ is the space of all permutation-invariant states with a total-excitation number $n$.

Let us denote the distance between any two vectors ${\bf u}  =( u_1,\dots, u_n)$ and ${\bf v} = (v_1,\dots, v_n)$ in $\mathbb N^n$ as 
\begin{align}
d({\bf u},{\bf v}) = \sum_{j=1}^n | u_j - v_j| \label{dist-criterion}.
\end{align}
This distance, also known as the Manhattan distance, is distinct from the usual Hamming distance.
Let $C \subseteq \mathbb N^n$. Then we define the minimum distance of $C$ to be 
\begin{align}
d(C) = \min_{ {\bf u},{\bf v} \in C } 
\{ d({\bf u},{\bf v}) :   {\bf u} \neq {\bf v}\}.
\end{align}
Our non-standard definition of minimum distance arises because we use a metric induced by the Manhattan distance as opposed to the Hamming distance.
We will later be interested in the minimum distance of the set $\widetilde{Q_u}$ for $Q \subseteq P(w)$ for some positive integer $w$.

%%%%%%%%%%%%%%%%%%%%%%%%%%%%%%%%%%%%%%%%%%%
\section{Quantum error correction criterion}
%%%%%%%%%%%%%%%%%%%%%%%%%%%%%%%%%%%%%%%%%%%
Here, we review some of the underlying theory of quantum error correction for AD errors on constant-excitation quantum codes, beginning from the Knill-Laflamme (KL) quantum error correction criterion \cite{KnL97} .
Given that we wish to correct $t$ AD errors on $n$ modes, it is both necessary and sufficient to have the KL conditions satisfied for the AD errors $A_{\bf k}$ for which $k_1+\dots + k_n \le t$.
Just as in \cite{CLY97}, for the KL quantum error correction criterion to hold for a quantum code with logical codewords $|0_L\>$ and $|1_L\>$ on $n$ modes with respect to $t$ AD errors, it suffices to require that the following equations hold.
\begin{align}
\<0_L |A_{\bf x}^\dagger A_{\bf x} |0_L\> &= 
\<1_L |A_{\bf x}^\dagger A_{\bf x} |1_L\>
 \quad \forall {\bf x} \in \overline {\mathcal K}_{n,t}
  \label{nodeform-diag}, \\
\<0_L |A_{\bf x}^\dagger A_{\bf y} |0_L\> &= 
\<1_L |A_{\bf x}^\dagger A_{\bf y} |1_L\> = 0
 \quad \forall \mbox{ distinct } {\bf x} , {\bf y} \in \overline {\mathcal K}_{n,t}
  \label{nodeform-offdiag}, \\
\<0_L| A_{\bf x}^\dagger A_{\bf y} | 1_L \> &= 0 
 \quad \forall {\bf x} , {\bf y} \in \overline {\mathcal K}_{n,t}
\label{ortho}.
\end{align}
In the usual KL quantum error correction criterion, the right hand side of (\ref{nodeform-offdiag}) does not have to be zero. 
We have made this restriction just as in \cite{CLY97} to make the construction of our quantum codes more tractable.
For convenience, we hereby call the constraints (\ref{nodeform-diag}) the non-deformation quantum error correction criterions, 
and the constraints (\ref{nodeform-offdiag}) and (\ref{ortho}) the orthogonal quantum error correction criterions.

For us, the condition (\ref{nodeform-offdiag}) always holds as long as the logical codewords are linear combinations of vectors $|{\bf v}\>$ for which ${\bf v}\in C \subset \mathbb N^{n}$ and $d(C)\ge 2t+1$.
To see this, note first that from \cite[Theorem 2]{CLY97}, as long as ${\bf v}$ and ${\bf w}$ are distinct vectors from $\mathbb N^n$ where $d({\bf v},{\bf w})\ge 2t+1$, then for every ${\bf x}$ and ${\bf y}$ in $\overline {\mathcal K}_{n,t}$, we have $\< {\bf v} |  A_{\bf x} ^\dagger A_{\bf  y} |  {\bf w}  \>=0$.
Second, when ${\bf x}$ and ${\bf y}$ are distinct elements from $\overline {\mathcal K}_{n,t}$,
it is clear that $A_{\bf x}|{\bf v}\>$ and $A_{\bf y}|{\bf v}\>$ are either zero or $\alpha_{\bf x} | {\bf u}\> 
$ and $\alpha_{\bf y} | {\bf u}'\> 
$ respectively for real numbers $\alpha_{\bf x},\alpha_{\bf y}$ and vectors ${\bf u},{\bf u}'$ in $\mathbb N^n$.
Since ${\bf u}\neq {\bf u}'$ from the distinctness of ${\bf x}$ and ${\bf y}$, it follows that 
$\<{\bf v}| A_{\bf y}^\dagger A_{\bf x}  |{\bf v}\> = 0$ for all distinct ${\bf x} $ and ${\bf y}$ in $\overline {\mathcal K}_{n,t}$. 
Hence it follows that if $|\psi\>$ is any vector that is a linear combination of basis states $|{\bf v}\>$ for which ${\bf v}\in C$ with $d(C)\ge 2t+1$, for all distinct ${\bf x} $ and ${\bf y}$ in $\overline {\mathcal K}_{n,t}$,
we must have $\<\psi | A_{\bf y}^\dagger A_{\bf x}  |\psi\>=  0$.

When the condition (\ref{nodeform-offdiag}) holds for a quantum code and $\<i_L|A_{\bf x} ^\dagger A_{\bf x}|i_L\> > 0$ for all $i=0,1$ and ${\bf x} \in \overline {\mathcal K}_{n,t}$, the quantum code is non-degenerate.
To see this, note that
\begin{align}
\sum_{{\bf x} \in \overline {\mathcal K}_{n,t}}\<i_L|A_{\bf x} ^\dagger A_{\bf x}|i_L\> |{\bf x}\>\<{\bf x}|\notag
\end{align}
is diagonal with positive diagonal entries.
Since this matrix is invertible and hence full rank, Gottesman's definition \cite[Page 7, line 1]{gottesman2002introduction},
implies that such quantum codes are non-degenerate.

Since the orthogonality condition (\ref{ortho}) holds as the logical codewords are linear combinations of vectors $|{\bf v}\>$ for which ${\bf v}\in C \subset \mathbb N^{n}$ and $d(C)\ge 2t+1$ and ${\bf x} $ and ${\bf y}$ in $\overline {\mathcal K}_{n,t}$ \cite[Theorem 2]{CLY97},
the only non-trivial error correction criterion is the non-deformation condition (\ref{nodeform-diag}).
Because quantum codes that we construct are permutation-invariant, it suffices to restrict the error-inducing Kraus operators that arise from partitions of $\kappa$ where $\kappa \le t$ and $t \le n$.
This is because for any permutation-invariant quantum state $|\psi\>$ and any Kraus operator $B$, we have
\begin{align}
\<\psi| B ^\dagger B   |\psi\> 
=  \<\psi| \pi ^\dagger  B^\dagger \pi \pi ^\dagger B \pi  |\psi\> 
=  \<\psi| (\pi ^\dagger  B \pi ) ^\dagger \pi ^\dagger B \pi  |\psi\>\notag.
\end{align}
Hence for every partition $\lambda = (\lambda_1, \dots, \lambda_\kappa)$ in $P(\kappa)$, we denote by
$A_{\lambda,n}$ the amplitude damping operator on $n$ modes with respect to ${\lambda}$ where 
\begin{align}
A_{\lambda,n} =  A_{\lambda_1} \otimes \dots \otimes A_{\lambda_\kappa} \otimes A_0 ^{\otimes n-\kappa}. 
\end{align}

If the conditions (\ref{nodeform-diag}), (\ref{nodeform-offdiag}), and (\ref{ortho}) hold for the constant-excitation quantum code with total excitation number $n$, then the worst-case fidelity is at least $\sum_{k=0}^t \binom n k \gamma^k (1-\gamma)^{n-k},$ as proved in \cite{CLY97}.
In fact, the entanglement fidelity exhibits the same behavior, as we now illustrate.

The entanglement fidelity of a quantum code quantifies how well an entangled state 
\[|\psi\> =
\frac{|0\>\otimes |0_L\> +  |1\> \otimes |1_L\>}{\sqrt 2}
\]
is protected when 
the half of it which is encoded into a quantum code 
with logical codewords $|0_L\>$ and $|1_L\>$ is exposed to noise.
If the recovery channel of the quantum code is given by $\mathcal R$, 
its entanglement fidelity with respect to AD errors is 
\begin{align}
\<\psi| \mathcal I  \otimes (\mathcal R \circ \mathcal A)(|\psi\>\<\psi|)  |\psi\>
=
\left(
\<0_L| \mathcal R (\mathcal A( |0_L\>\<0_L| ) ) |0_L\> + 
\<1_L| \mathcal R (\mathcal A( |1_L\>\<1_L| ) ) |1_L\> 
\right)/2
,
\end{align}
where $\mathcal I$ is the identity channel on a single qubit, and $\mathcal A$ is the quantum channel corresponding to an AD channel that acts independently and identically on every mode in the quantum code.
Now we can write $\mathcal A = \mathcal A' + \mathcal A''$ where $\mathcal A'$ and $\mathcal A''$ are both quantum operations that induce at most $t$ AD errors and at least $t+1$ AD errors respectively.
Clearly if the quantum code is completely correctible with respect to the quantum operation $\mathcal A'$, 
then Proposition \ref{prop:const}
implies that the entanglement fidelity is at least the trace of 
\[
 \left( \mathcal A'( |0_L\>\<0_L| )  + \mathcal A'( |1_L\>\<1_L| )  \right)/2,
\]
which is at least $\sum_{k=0}^t \binom n k \gamma^k (1-\gamma)^{n-k}$, if the quantum code is a constant-excitation quantum code with $n$ total excitations.

%%%%%%%%%%%%%%%%%%%%%%%%%%%%%%%%%%%%%%%%%%%%%%%%%%%%%%v 
%%%%%%%%%%%%%%%%%%%%%%%%%%%%%%%%%%%%%%%%%%%%%%%%%%%%%%
\section{From partitions to quantum codes}
\label{sec:t2}
%%%%%%%%%%**************************************************************************************************
Here we will see how a PI constant-excitation quantum code can be constructed from integer partitions. 
Some of the integer partitions label the AD errors, while the others label the Dicke states that our code is supported on.
It is the 
permutation-invariant
property of our code that allows us 
to restrict our attention to AD errors that are labeled by integer partitions of the numbers from 1 to $t$.
To be more explicit, since the norm of an AD error acting on a permutation-invariant state is equivalent to the norm of a permuted AD error acting on the same permutation-invariant state, in studying the non-deformation conditions, it suffices to study only the AD errors labeled by integer partitions of the number of AD errors.
We label these AD errors with 
the vectors 
$\tau_1 ,\dots,\tau_{\overline p(t)}$
where $\tau_1 = ( (1),{\bf 0}_{t-1} )$, $\tau_2 = ( (2,0),{\bf 0}_{t-2} ) $, 
$\tau_3 = ( (1,1),   {\bf 0}_{t-2} )$,
$\tau_4 = ( (3,0,0), {\bf 0}_{t-3} )$,
$\tau_5 = ( (2,1,0), {\bf 0}_{t-3} )$,
$\tau_6 = ( (1,1,1), {\bf 0}_{t-3} )$,
 and so on. 
We will consider quantum codes supported on Dicke states represented by the partitions of a suitably chosen integer $w$.
We then construct a matrix $A$ with rows labeled by the AD errors and columns labeled by Dicke states. 
In the paragraphs that follow, we will describe the structure of this matrix.

We now define the matrix elements of $A$.
They are
\begin{align}
a_{i,j} = \<\widetilde{{\bf q}_j} | A_{\tau_i, n}  ^\dagger A_{\tau_i, n}  | \widetilde{{\bf q}_j} \> 
\frac{1}{\gamma^{\wt(\tau_i)} (1-\gamma)^{n-\wt(\tau_i )}  },
\label{eq:aij}
\end{align}
where ${\bf q}_j$ are vectors with weight equal to $n$.
We can arrange these matrix elements into a matrix $A$, with the rows labeled by the AD errors, and the columns labeled by the quantum code's basis elements. 
The indices $i=1,\dots,\overline p(t)$ label the AD errors,
 and the indices $j=1,\dots, c$ label the Dicke states that the quantum code to be designed will be supported on. 
Writing down this matrix explicitly, we have
\begin{align}
A = 
\begin{pmatrix}
a_{1,1} & \dots  & a_{1,c} \\ 
\vdots & \dots & \vdots \\
a_{\overline p(t), 1} & \dots  & a_{\overline p(t) , c} \\ 
\end{pmatrix} 
=
\begin{pmatrix}
- {\bf a}_{1}^T - \\ 
\vdots \\
- {\bf a}_{\overline p(t)}^T -\\ 
\end{pmatrix} 
\label{eq:A}.
\end{align}
What is important about the matrix $A$ is that its properties will be used to 
design a PI constant-excitation quantum code that corrects AD errors.
For this to be possible, it is important that $A$ is independent of $\gamma$,
which indeed is the case because of the normalization condition in (\ref{eq:aij}).
Properties of the code will then be inferred from the nullity of $A$.

Independently from the matrix $A$, we can define basis states for a PI constant-excitation quantum code.
Our PI constant-excitation quantum code is thus defined only by the partitions labeling the Dicke states on which it is supported, and a real vector ${\bf x}$.
We represent the basis states of this quantum code in terms of linear combinations of Dicke states 
labeled by the partitions ${\bf q}_1,\dots, {\bf q}_{c}$ that all have the same weight equal to $n$,
and a non-zero real column vector ${\bf x} = (x_1 , \dots, x_c)^T$ 
such that $x_1+\dots + x_c =0$.
The basis states of our quantum code are
\begin{align}
|0_L\> &=  
\frac{1}{\sqrt x}\left( \sqrt{x_1^+}|\widetilde{{\bf q}_1}\> + \dots + \sqrt{x_c^+}|\widetilde{{\bf q}_c}\>  \right) 
\notag\\
|1_L\> &=  
\frac{1}{\sqrt x}\left( \sqrt{x_1^-}|\widetilde{{\bf q}_1}\> + \dots + \sqrt{x_c^-}|\widetilde{{\bf q}_c}\>  \right) 
\label{eq:code-defi}
\end{align}
where $x_i^+ = \max\{x_i,0\}$,
$x_i^- = \max\{-x_i,0\}$ and
$x = x_1^+ + \dots + x_c^+$. 

Roughly speaking, the matrix $A$ can be made to encapsulate the KL quantum error correction criterion with respect to the quantum code that we have defined in (\ref{eq:code-defi}).
More precisely, when a certain distance criterion holds and when the nullity of $A$ is at least one,
there are non-trivial solutions of the linear system of equations $A{\bf x} = 0$ for which 
$x_1 + \dots +x_c = 0$. 
This allows the derivation of a PI constant-excitation quantum code that corrects $t$ AD errors. 
This is our main result, and we state it in the following theorem.
%%%%%%%%%%%%%%%%% THEOREM 
\begin{theorem}
\label{thm:nullity-2-code}
Let $w, u$ and $t$ be positive integers and let $A$ be a matrix with matrix elements given by (\ref{eq:aij}). Let $Q=P(w)$ with $Q_u$ be given by (\ref{eq:Qu-defi}) and $\widetilde Q_u$ given by (\ref{def:widetilde Qu}).
If $d(\widetilde {Q_u}) \ge 2t+1$ and if the nullity of $A$ is at least one, 
then there exists a permutation-invariant constant-excitation quantum code that corrects $t$ AD errors using $uw$ total excitations.
Moreover, such a quantum code can be derived from (\ref{eq:code-defi});
the logical codewords (\ref{eq:code-defi}) derived from any non-zero vector ${\bf x}$ in the nullspace of $A$ will span such a quantum code.
\end{theorem}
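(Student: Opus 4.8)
The plan is to take an arbitrary non-zero vector ${\bf x} = (x_1,\dots,x_c)^T$ in the nullspace of $A$, to form the codewords (\ref{eq:code-defi}), and to identify each of the three Knill--Laflamme conditions (\ref{nodeform-diag})--(\ref{ortho}) with a structural fact about ${\bf x}$ and $A$. The governing idea is that the matrix $A$ was defined precisely so that the matrix equation $A{\bf x} = {\bf 0}$ is a restatement of the non-deformation condition (\ref{nodeform-diag}); the remaining conditions come for free from the distance hypothesis.

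First I would check that the zero-sum constraint $x_1 + \dots + x_c = 0$ is automatic, which is what makes the codewords (\ref{eq:code-defi}) well defined. The key observation is that the first row of $A$, corresponding to $\tau_1 = ((1),{\bf 0}_{t-1})$, is the all-ones vector. Indeed, since $A_k^\dagger A_k$ is diagonal by Proposition \ref{prop:A-dag-A}, one has $a_{1,j} = \frac{1}{|\widetilde{{\bf q}_j}|}\sum_{{\bf y}\in\widetilde{{\bf q}_j}} y_1$ after cancelling the normalization $\gamma(1-\gamma)^{n-1}$ via Proposition \ref{prop:const} with $\kappa=1$; by symmetry of the permutation class this average equals $\wt({\bf q}_j)/n = n/n = 1$, using that every ${\bf q}_j$ has weight $n$. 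Hence $(A{\bf x})_1 = \sum_j x_j = 0$, which forces $\langle 1_L|1_L\rangle = (\sum_j x_j^-)/x = 1$ and, since ${\bf x}\neq 0$ then has both a positive and a negative entry, guarantees both codewords are nonzero and the code is genuinely two-dimensional.

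Next I would dispatch orthogonality and the off-diagonal conditions. The Dicke states $|\widetilde{{\bf q}_j}\rangle$ are supported on pairwise disjoint sets of computational basis vectors and are individually normalized, hence orthonormal; combined with $x_j^+ x_j^- = 0$ for each $j$, this gives $\langle 0_L|1_L\rangle = 0$. Both codewords are supported on $\widetilde{Q_u}$, and since $d(\widetilde{Q_u}) \ge 2t+1$ the argument already recorded in the excerpt, via \cite[Theorem 2]{CLY97}, shows that (\ref{nodeform-offdiag}) and (\ref{ortho}) hold. The main step is then the non-deformation condition (\ref{nodeform-diag}). By the permutation-invariance reduction established earlier, it suffices to check it for the errors $A_{\tau_i,n}$ with $i=1,\dots,\overline p(t)$. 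Because $A_{\tau_i,n}^\dagger A_{\tau_i,n}$ is diagonal and the Dicke states have disjoint support, all cross terms vanish, so $\langle 0_L|A_{\tau_i,n}^\dagger A_{\tau_i,n}|0_L\rangle = \frac{\gamma^{\wt(\tau_i)}(1-\gamma)^{n-\wt(\tau_i)}}{x}\sum_j x_j^+ a_{i,j}$, with the analogous expression in $x_j^-$ for $|1_L\rangle$. Their difference is proportional to $\sum_j (x_j^+ - x_j^-)a_{i,j} = \sum_j x_j a_{i,j} = (A{\bf x})_i = 0$, which is exactly where the nullspace hypothesis enters. Finally, each ${\bf q}_j$ has weight $n=uw$, so the code has constant total excitation number $n$, and each $|\widetilde{{\bf q}_j}\rangle$ is manifestly permutation-invariant; collecting these facts shows the code corrects $t$ AD errors with all claimed properties.

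I expect the main obstacle to be bookkeeping rather than conceptual depth: correctly carrying the $\gamma$-dependent normalization of (\ref{eq:aij}) through both the all-ones-first-row computation and the non-deformation identity, and justifying that the Dicke-state cross terms vanish (diagonality of $A_{\tau_i,n}^\dagger A_{\tau_i,n}$ together with disjoint supports). Once the identity $\sum_j x_j a_{i,j} = (A{\bf x})_i$ is in place, the equivalence between $A{\bf x}={\bf 0}$ and (\ref{nodeform-diag}) makes the conclusion immediate.
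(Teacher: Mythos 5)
Your proposal is correct and follows essentially the same route as the paper's proof: the all-ones first row of $A$ forces $\sum_j x_j = 0$ so that (\ref{eq:code-defi}) is well defined, the distance hypothesis $d(\widetilde{Q_u})\ge 2t+1$ disposes of the orthogonality conditions via \cite[Theorem 2]{CLY97}, and the identity $x_j = x_j^+ - x_j^-$ converts the non-deformation conditions for the errors $A_{\tau_i,n}$ into the linear system $A\mathbf{x}=\mathbf{0}$. The only point the paper treats separately that you leave implicit is the non-deformation condition for the weight-zero error $A_0^{\otimes n}$, which is immediate from the constant total excitation number you already record.
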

%%%%%%%%%%%%%%%%% END THEOREM 

Because bounding the nullity of $A$ is crucial in demonstrating that the quantum code as defined by (\ref{eq:code-defi}) corrects $t$ AD errors,
we will proceed to count the number of sets of linearly dependent rows in $A$ to obtain such a bound.
We use the fact that the rows in $A$ that correspond to $\kappa$ AD errors are linearly dependent. 
This arises because of the combinatorial identity in Proposition \ref{prop:const}.
This idea extends to certain submatrices of $A$ to demonstrate more linearly dependent rows.
The following lemma tells us how some rows of $A$ are linearly dependent,
where ${\bf a}_i^T$ denotes the $i$-th row of $A$.
%%%%%%%%%%%%%%%%% LEMMA 
\begin{lemma}
\label{lem:linear-dependence}
Let $c_i$ denote the number of ways to permute $(\tau_{i}| {\bf 0}_{n-n_i})$, where $n_i$ is the number of components in $\tau_i$.
For all $k=1, \dots, t$, the sum of the rows of $A$ 
 corresponding to the errors that induce $k$ photon losses sum to 
$c_{\overline  p (k-1) + 1}  {\bf a}_{\overline  p (k-1) + 1}^T
+ \dots +
c_{\overline  p (k)} {\bf a}_{\overline p(k)}^T =  \binom{n}{k}  {\bf 1}_c^T$.
\end{lemma}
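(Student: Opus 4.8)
The plan is to prove the asserted vector identity one column at a time: it suffices to fix a column index $j$ and establish the scalar identity $\sum_{i=\overline p(k-1)+1}^{\overline p(k)} c_i a_{i,j} = \binom{n}{k}$, since every entry of ${\bf 1}_c^T$ equals $1$. Here I would first record that the row indices $i$ running from $\overline p(k-1)+1$ to $\overline p(k)$ are in bijection with the $p(k)$ partitions of $k$, so this block of rows corresponds exactly to the amplitude damping errors $\tau_i \in P(k)$, each of weight $\wt(\tau_i)=k$.

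The structural heart of the argument is that the full weight-$k$ damping sum decomposes into symmetric-group orbits indexed by these partitions. The set $\mathcal K_{n,k}$ is partitioned by the $S_n$-action into orbits, two vectors lying in the same orbit precisely when they share the same multiset of entries, i.e. the same partition of $k$ padded with zeros to length $n$; the orbit of the representative $\tau_i$ has size exactly $c_i$, the number of distinct permutations of $(\tau_i | {\bf 0}_{n-n_i})$. This lets me write
\begin{align}
\sum_{{\bf k} \in \mathcal K_{n,k}} A_{\bf k}^\dagger A_{\bf k}
= \sum_{i = \overline p(k-1)+1}^{\overline p(k)} \ \sum_{{\bf k} \in \widetilde{(\tau_i | {\bf 0}_{n-n_i})}} A_{\bf k}^\dagger A_{\bf k}.
\notag
\end{align}
I would then collapse each inner orbit sum using permutation invariance of the Dicke state $|\widetilde{{\bf q}_j}\>$: for any permutation $\pi$ we have $\pi|\widetilde{{\bf q}_j}\> = |\widetilde{{\bf q}_j}\>$ and $\pi^\dagger A_{\bf k}^\dagger A_{\bf k}\pi = A_{\pi^{-1}({\bf k})}^\dagger A_{\pi^{-1}({\bf k})}$, exactly as in the displayed computation preceding the lemma. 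Consequently $\<\widetilde{{\bf q}_j}| A_{\bf k}^\dagger A_{\bf k} |\widetilde{{\bf q}_j}\>$ is constant on each orbit, equal to its value at the representative, so the inner sum contributes $c_i \<\widetilde{{\bf q}_j}| A_{\tau_i, n}^\dagger A_{\tau_i, n} |\widetilde{{\bf q}_j}\>$.

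Finally I would evaluate the left-hand side directly. Each $A_{\bf k}^\dagger A_{\bf k}$ is diagonal in the computational basis by Proposition \ref{prop:A-dag-A}, so $\<\widetilde{{\bf q}_j}|\sum_{\bf k} A_{\bf k}^\dagger A_{\bf k}|\widetilde{{\bf q}_j}\>$ reduces to the average of the diagonal values $\<{\bf y}|\sum_{\bf k} A_{\bf k}^\dagger A_{\bf k}|{\bf y}\>$ over the computational basis vectors ${\bf y}\in\widetilde{{\bf q}_j}$. Since ${\bf q}_j$ has weight $n$, every such ${\bf y}$ has weight $n$, and Proposition \ref{prop:const} gives each of these diagonal values as $(1-\gamma)^{n-k}\gamma^k\binom{n}{k}$; averaging a constant returns that constant. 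Equating the two evaluations and dividing through by $\gamma^k(1-\gamma)^{n-k}=\gamma^{\wt(\tau_i)}(1-\gamma)^{n-\wt(\tau_i)}$ converts each term into $c_i a_{i,j}$ via the definition (\ref{eq:aij}), yielding $\sum_i c_i a_{i,j}=\binom{n}{k}$. The computation is mostly an assembly of already-established facts, so the only step needing genuine care is the orbit bookkeeping in the decomposition above — checking that the $S_n$-orbits of $\mathcal K_{n,k}$ are in bijection with partitions of $k$ and that the orbit cardinalities coincide with the $c_i$. This is routine once one observes that $k\le t\le n$, so every partition of $k$ fits into $n$ zero-padded slots.
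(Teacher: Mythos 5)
Your proof is correct and follows essentially the same route as the paper's: both arguments decompose the full weight-$k$ sum $\sum_{{\bf k}\in\mathcal K_{n,k}}A_{\bf k}^\dagger A_{\bf k}$ into $S_n$-orbits of size $c_i$ indexed by the partitions of $k$, exploit the permutation invariance of $|\widetilde{{\bf q}_j}\>$ and the diagonality of $A_{\bf k}^\dagger A_{\bf k}$ to reduce each orbit to $c_i$ copies of the representative term, and then evaluate the total via Proposition \ref{prop:const} before dividing out $\gamma^k(1-\gamma)^{n-k}$. The paper phrases the orbit bookkeeping as an average over $\pi\in S_n$ with stabilizer-size corrections, but this is only a presentational difference.
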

%%%%%%%%%%%%%%%%% END LEMMA 
%%%%%%%%%%%%%%%%% PROOF  
\begin{proof}
Now note that $A_{\tau_i, n}  ^\dagger A_{\tau_i, n}$ is a diagonal matrix.
The number of ways to permute ${\bf q}_j$ is $|\widetilde{{\bf q}_j}|$. Hence the number of elements of the symmetric group that leave ${\bf q}_j$ invariant is $n! / |\widetilde{{\bf q}_j}|$.
Hence 
\begin{align}
\sum_{{\bf x} \in \widetilde{{\bf q}_j}} 
\< {\bf x} | A_{\tau_i, n}  ^\dagger A_{\tau_i, n} | {\bf x} \> 
=
\frac{1}{n! /| \widetilde{{\bf q}_j}| }
\sum_{\pi \in S_n} 
\<{\bf q}_j | \pi ^\dagger	 A_{\tau_i, n}  ^\dagger A_{\tau_i, n}  \pi | {\bf q}_j \> .
\end{align}
From this, 
\begin{align}
a_{i,j} \gamma^{\wt(\tau_i)} (1-\gamma)^{n-\wt(\tau_i)} = 
\<\widetilde{{\bf q}_j} | A_{\tau_i, n}  ^\dagger A_{\tau_i, n}  | \widetilde{{\bf q}_j} \>
&=
\frac{1}{|\widetilde{{\bf  q}_j}| }
\sum_{{\bf x} \in \widetilde{{\bf q}_j}} 
\< {\bf x} | A_{\tau_i, n}  ^\dagger A_{\tau_i, n} | {\bf x} \> \notag\\
&= \frac{1}{n!}
\sum_{\pi \in S_n} 
\< {\bf q}_j | \pi ^\dagger A_{\tau_i, n}  ^\dagger A_{\tau_i, n}  \pi | {\bf q}_j \>,
\end{align}
where $S_n$ denotes the matrix representation of the symmetric group that permutes the $n$ modes.
Using the definition of $c_i$, it follows that 
\begin{align}
\sum_{i = \overline p(k-1) + 1}^{\overline p(k)}
\frac{1}{n!/c_i}
\sum_{\pi \in S_n} 
\< {\bf q}_j | \pi ^\dagger A_{\tau_i, n}  ^\dagger A_{\tau_i, n}  \pi | {\bf q}_j \>
&=
\sum_{
\substack{
{\bf y} = (y_1, \dots, y_n) \in \mathbb N^n \\
y_1+ \dots + y_n = k \\
}}
\< {\bf q}_j | A_{\bf y}  ^\dagger A_{\bf y}  | {\bf q}_j \>
= \gamma^k(1-\gamma)^{n-k} \binom{n}{k} ,
\end{align}
where the last equality follows from Proposition \ref{prop:const}.
From this, it follows that 
\begin{align}
\sum_{i = \overline p(k-1) + 1}^{\overline p(k)}
c_i a_{i,j} =  \binom{n}{k},
\end{align}
and hence
\begin{align}
\sum_{i = \overline p(k-1) + 1}^{\overline p(k)}
c_i {\bf a}_{i}
= \binom{n}{k}  {\bf  1}_c  .\label{eq:rowsum}
\end{align}
\end{proof}
%%%%%%%%%%%%%%%%% END PROOF 
To better understand the ramification of Lemma \ref{lem:linear-dependence}, we explain the structure of the rows of $A$ in greater detail. The rows in $A$ are labeled by integer partitions corresponding to the AD errors. 
The first row of $A$ correspond to 1 AD errors. The second row and third corresponds to 2 AD errors with corresponding partitions given by (2,0) and (1,1) respectively.
The fourth, fifth and sixth rows corresponds to 3 AD errors with corresponding partitions given by (3,0,0) and (2,1,0) and (1,1,1) respectively.
Then Lemma \ref{lem:linear-dependence} implies the following.
\begin{enumerate}
\item One photon loss: The first row of $A$ is proportional to a vector of ones.
\item Two photon losses: A linear combination of the second and third rows of $A$ with positive integer coefficients is proportional to a vector of ones.  
\item Three photon losses: A linear combination of the fourth, fifth and sixth rows of $A$ with positive integer coefficients  is proportional to a vector of ones.
\end{enumerate}
Certain subsets of rows in $A$ are hence linearly dependent according to Lemma \ref{lem:linear-dependence},
namely the rows labeled by elements from  $\{\overline p(k-1) +1, \dots , \overline p(k)\}$ for every positive integer $k$.

By employing a different type of counting argument, one can note that different subsets of rows within $A$ are linearly dependent. This is given by the following Proposition.
\begin{proposition}\label{prop}
For any $h=1,2\dots, t-1$, the row vector ${\bf a}_{\bar p(h)}^T$ in $A$ is linearly dependent on the rows in $A$ where $w$ AD errors occur on at least $h$ modes, for every $w = h+1,\dots ,t$. 
\end{proposition}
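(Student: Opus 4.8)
The plan is to reduce the statement to a single polynomial identity among elementary symmetric functions and then invoke the weight constraint $\sum_\ell q_\ell = n$. First I would record, exactly as in the proof of Lemma \ref{lem:linear-dependence}, that the row entries are values of symmetric polynomials: writing $f_{\tau_i}({\bf q}) = \sum_{{\bf k} \in \widetilde{(\tau_i | {\bf 0}_{n-n_i})}} \prod_{\ell=1}^n \binom{q_\ell}{k_\ell}$ (where $n_i$ is the number of parts of $\tau_i$), the computation there gives $c_i\, a_{i,j} = f_{\tau_i}({\bf q}_j)$, and each $f_{\tau_i}$ is a symmetric polynomial in the parts of ${\bf q}$, homogeneous of degree $\wt(\tau_i)$. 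In particular the row ${\bf a}_{\overline p(h)}^T$ corresponds to $\tau_{\overline p(h)} = (1,\dots,1,{\bf 0}_{n-h})$, for which $f_{\tau_{\overline p(h)}} = e_h$, the $h$-th elementary symmetric polynomial of the parts, while $f_{(1,\dots,1)} = e_{h+1}$ for the all-ones partition of $h+1$, and $f_{(2,1,\dots,1)} = \sum_a \binom{q_a}{2}\, e_{h-1}({\bf q}\setminus q_a)$ for the partition of $h+1$ with $h-1$ trailing ones. It then suffices to exhibit the dependence using only these two weight-$(h+1)$ rows, since both have at least $h$ nonzero parts and so lie in the claimed set, and linear dependence on a subset implies it on the whole set.

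The key step is the free polynomial identity
\[
(e_1 - h)\, e_h = 2\, f_{(2,1,\dots,1)} + (h+1)\, e_{h+1},
\]
which I would prove by expanding $e_1 e_h = \big(\sum_\ell q_\ell\big)\big(\sum_{|S|=h}\prod_{s\in S} q_s\big)$ and splitting the sum according to whether the new index lies outside the $h$-set $S$ or inside it. The outside terms give $(h+1)\, e_{h+1}$, since each $(h+1)$-subset is counted $h+1$ times, while the inside terms give $\sum_a q_a^2\, e_{h-1}({\bf q}\setminus q_a)$; the latter equals $2\, f_{(2,1,\dots,1)} + h\, e_h$ because $\binom{q_a}{2} = \tfrac12(q_a^2 - q_a)$ and $\sum_a q_a\, e_{h-1}({\bf q}\setminus q_a) = h\, e_h$.

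Finally I would specialize and translate back. Every column ${\bf q}_j$ is a partition of $n$, so $e_1({\bf q}_j) = \wt({\bf q}_j) = n$; since $h \le t-1 \le n-1$ we have $n - h > 0$, and dividing the identity by $n-h$ yields $e_h = \tfrac{2}{n-h}\, f_{(2,1,\dots,1)} + \tfrac{h+1}{n-h}\, e_{h+1}$ at each column. Rewriting $f_{\tau_i}({\bf q}_j) = c_i\, a_{i,j}$ converts this into a linear relation expressing ${\bf a}_{\overline p(h)}^T$ as a combination of the weight-$(h+1)$ rows ${\bf a}_{(2,1,\dots,1)}^T$ and ${\bf a}_{\overline p(h+1)}^T$, which proves the claim.

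The only genuinely creative point is discovering the two-term identity; everything else is bookkeeping, and I expect no real obstacle beyond it. I would emphasize that the constraint $\sum_\ell q_\ell = n$ is indispensable: as free symmetric polynomials, $e_h$ (degree $h$) cannot be a combination of the degree-$(h+1)$ functions $f_{(2,1,\dots,1)}$ and $e_{h+1}$, so the dependence is created precisely by substituting the constant $e_1 = n$, which merges the two graded pieces. I would also warn that one should not try to prove the statement one weight $w$ at a time for $w > h+1$, as a single-weight dependence genuinely fails there; the argument must anchor on $w = h+1$, with the full-range statement following by monotonicity of the linear span.
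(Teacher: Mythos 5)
Your identity $(e_1-h)e_h = 2f_{(2,1^{h-1})} + (h+1)e_{h+1}$ is correct, and after substituting $e_1=n$ it is exactly the paper's relation for the case $w=h+1$: the paper proves, by double counting pairs (a $w$-ball selection $S$, an $h$-subset of $S$ hitting $h$ distinct bins), that
\[
\binom{n-h}{w-h}\, b_{\overline p(h),j} \;=\; \sum_{i\in I_{w,h}\cup\dots\cup I_{w,w}} d_{i,h}\, b_{i,j},
\]
and for $w=h+1$ the right-hand side is $2f_{(2,1^{h-1})}+(h+1)e_{h+1}$, matching yours. So far the two arguments are the same computation in different clothing.

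The problem is your final paragraph. The proposition is a per-$w$ statement --- for \emph{each} $w\in\{h+1,\dots,t\}$ separately, ${\bf a}_{\overline p(h)}^T$ lies in the span of the weight-$w$ rows touching at least $h$ modes --- and your claim that ``a single-weight dependence genuinely fails'' for $w>h+1$ is false. The displayed double-counting identity holds for every such $w$; e.g.\ for $h=2$, $w=4$ one has $\binom{n-2}{2}e_2 = 3f_{(3,1)}+4f_{(2,2)}+5f_{(2,1,1)}+6e_4$ identically on partitions of $n$ (check it on ${\bf q}=(2,1,1)$ or $(3,1)$ with $n=4$). Your heuristic that a degree-$h$ symmetric function cannot equal a combination of degree-$w$ ones ignores that the constraint $e_1=n$ can be applied $w-h$ times, not just once. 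This is not a cosmetic issue: Corollary \ref{coro} eliminates $t-h$ rows for each $h$, one for each value of $w$, totalling $\binom{t-1}{2}$ extra dependent rows beyond Lemma \ref{lem:linear-dependence}. Your version supplies only one relation per $h$ (the $w=h+1$ one; ``monotonicity of the linear span'' gives nothing for larger $w$, since the spans of distinct weight classes are not nested), which would yield only $t-2$ extra rows and would not suffice to prove the bound $p(w)+\binom{t}{2}\ge \overline p(t)$. To repair your approach you would need to iterate: multiply by further factors $(e_1-h-1),(e_1-h-2),\dots$ and show the resulting expansion at each stage involves only partitions touching at least $h$ modes --- or simply adopt the paper's double count, which handles all $w$ at once.
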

\begin{proof}
The crux of the proof arises from the fact that $c_i a_{i,j}$ is a non-negative integer with a combinatoric interpretation.
Namely, we can interpret ${\bf q}_j$ as a column label in $A$ that specifies a list of distinguishable bins that all together contain $n$ indistinguishable balls.
Let $\wt(\tau_i)$ denote the sum of the components in the $\tau_i$.
Then we can interpret $c_i a_{i,j}$ as the number of ways of picking $\wt(\tau_i)$ balls such that the number of balls contributed by individual bins that conform to $\tau_i$.
\newline
For simplicity, denote $b_{i,j} = c_i a_{i,j}$ as entries of a matrix $B$ with row vectors ${\bf b}_i^T$. Also denote set of rows indices of $A$ for which $w$ AD errors occur on exactly $m$ modes to be 
\[I_{w,m} := \{i|wt(\tau_i) = w, \tau_i \mbox{ afflicts $m$ modes}\}.\]
To establish the proposition, we need to show that for any $h = 1, 2, . . . , t-1$ and 
$w > h$, the row vector ${\bf b}_{\bar p(h)}^T$  is a linear combination of ${\bf b}_i^T$, where the row indices belong to the set
\[I_{w,h} \cup \dots \cup I_{w,w}.\]
A crucial observation is now the following: given any partition ${\bf q}_j$ of $n$, ${\bf b}_{\bar p(h),j}$ is the number of ways of picking $h$ balls from $h$ different bins.
There is another way to calculate ${\bf b}_{\bar p(h),j}$ with an overcounting argument, by first considering too many AD errors, and then counting the number of ways to remove AD errors to get just the right number and configuration. 
To be precise, we first pick $w > h$ balls from at least $h$ bins, and then pick $h$ balls from the $w$ selected. Mathematically this reads
\[
\binom {n-h}{w-h} b_{\bar p(h),j} = \sum_{i \in \cup_{m=h}^w I_{w,h}} b_{i,j} d_{i,h},
\]
where $d_{i,h}$ is the number of ways of picking $h$ balls out of exactly $h$ bins from those given by $\tau_i$.
Note that $d_{i,h}$ and $\binom {n-h}{w-h}$ do not depend on $j$. This establishes the lemma.
\end{proof}

By identifying sets of linearly dependent rows of $A$ and counting the number of non-intersecting linearly dependent sets, one can obtain a lower bound on the nullity of $A$, 
from which a lower bound on the number of basis states can be obtained. This in turns implies that whenever the inequality (\ref{eq:inequality}) is satisfied, then we have a PI constant-excitation quantum code that corrects $t$ AD errors.
%%%%%%%%%%%%%%%%% THEOREM 
\begin{corollary}
\label{coro}
Let $w$ and $t$ be positive integers such that $p(w) + \binom t 2  \ge   p(1) + \dots + p(t)$ and $w \ge 2$. 
Then there is a permutation-invariant constant-excitation quantum code with $w(t+1)$ total excitations and which corrects $t$ AD errors.
\end{corollary}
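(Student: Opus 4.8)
The plan is to invoke Theorem \ref{thm:nullity-2-code} with $Q=P(w)$ and the specific choice $u=t+1$, so that the total excitation number is $n=uw=w(t+1)$, exactly as required. With this choice the two hypotheses of Theorem \ref{thm:nullity-2-code} must be verified: the distance bound $d(\widetilde{Q_u})\ge 2t+1$, and a rank bound that forces the nullity of $A$ to be at least one. Once both hold, the logical codewords built from any nonzero null vector of $A$ via (\ref{eq:code-defi}) give the claimed code, and the corollary follows.

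First I would dispatch the distance condition by a short case analysis on pairs of distinct vectors in $\widetilde{Q_u}$. Every vector of the form $(u{\bf q}\,|\,{\bf 0}_{uw-w})$ and all its permutations have entries that are either $0$ or multiples of $u=t+1$; since any two distinct such vectors share the common weight $uw$, their signed coordinate differences sum to zero, so their Manhattan distance is at least $2u=2(t+1)\ge 2t+1$. The only remaining vector is ${\bf 1}_n$: if ${\bf q}$ has $r$ positive parts (so $1\le r\le w$), then a permutation of $(u{\bf q}\,|\,{\bf 0}_{uw-w})$ sits at distance $(uw-r)+(uw-r)=2(uw-r)\ge 2w(u-1)=2wt$ from ${\bf 1}_n$, using that each nonzero entry $uq_i\ge u\ge 2$ contributes $uq_i-1$ while each zero entry contributes $1$. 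Because $w\ge 2$ this is at least $4t\ge 2t+1$. Hence $d(\widetilde{Q_u})\ge 2t+1$; note this is precisely where the hypothesis $w\ge 2$ enters.

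It remains to show the nullity of $A$ is at least one. The columns of $A$ are indexed by $Q_u$, so there are $c=|Q_u|=p(w)+1$ of them, and the nullity equals $p(w)+1-\operatorname{rank}(A)$; thus it suffices to prove $\operatorname{rank}(A)\le p(w)$. Since the hypothesis rearranges to $p(w)\ge \overline{p}(t)-\binom t2$, it is enough to show $\operatorname{rank}(A)\le \overline{p}(t)-\binom t2$, i.e.\ to exhibit $\binom t2$ linearly independent linear relations among the $\overline{p}(t)$ rows of $A$. These are supplied by Proposition \ref{prop}: for each pair $(h,\ell)$ with $1\le h<\ell\le t$ it expresses the row ${\bf a}_{\overline{p}(h)}^T$ (the all-ones partition $1^h$, of weight $h$) as a fixed linear combination, with nonzero coefficients, of the weight-$\ell$ rows afflicting at least $h$ modes. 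The number of such pairs is $\sum_{h=1}^{t-1}(t-h)=\binom t2$, exactly the count needed, and this matches the explicit parameters in the introduction (for instance $t=2,w=2$ gives $n=6$ and $t=3,w=3$ gives $n=12$).

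The main obstacle is proving that these $\binom t2$ relations, call them $R_{h,\ell}$, are genuinely linearly independent, so that they cut the rank down by the full $\binom t2$. I would grade the rows of $A$ by the weight of their defining partition and observe that $R_{h,\ell}$ has nonzero coordinates only on rows of weight $h$ (the single term ${\bf a}_{\overline{p}(h)}^T$) and of weight $\ell$ (its right-hand side). Given a vanishing combination $\sum_{h<\ell}\alpha_{h,\ell}R_{h,\ell}={\bf 0}$, restricting to the weight-$t$ rows annihilates every relation except those with $\ell=t$; within that family the weight-$t$ supports shrink as $h$ grows, since a partition of $t$ with exactly $h$ parts occurs in $R_{h',t}$ only when $h'\le h$. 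Taking the smallest index $h$ with $\alpha_{h,t}\ne0$ and evaluating the combination at a partition with exactly $h$ parts isolates $\alpha_{h,t}$ and forces it to vanish, a contradiction, so all $\alpha_{h,t}=0$. Peeling off weights $w_0=t,t-1,\dots,2$ in turn repeats this triangular argument within each weight; at each stage the already-vanished coefficients remove the stray weight-$w_0$ contributions coming from the left-hand terms ${\bf a}_{\overline{p}(w_0)}^T$ of the higher relations $R_{w_0,\ell}$ with $\ell>w_0$, so only the genuine weight-$w_0$ relations survive. This drives every $\alpha_{h,\ell}$ to zero, establishes independence, and yields $\operatorname{rank}(A)\le \overline{p}(t)-\binom t2\le p(w)$, whereupon Theorem \ref{thm:nullity-2-code} completes the proof.
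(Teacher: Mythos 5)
Your proof is correct and follows essentially the same route as the paper: the same distance computation, the same $\binom{t}{2}$ dependency relations drawn from Lemma \ref{lem:linear-dependence} and Proposition \ref{prop} (your $R_{1,\ell}$ coincide with the Lemma \ref{lem:linear-dependence} relations, since ${\bf a}_1^T = {\bf 1}_c^T$), and the same rank--nullity conclusion feeding into Theorem \ref{thm:nullity-2-code}. The one place you go beyond the paper is the weight-graded triangular argument showing that the $\binom{t}{2}$ left-null relations are genuinely linearly independent --- the paper only asserts informally that one row can be ``eliminated'' per relation --- and that added care is a welcome tightening rather than a different approach.
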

%%%%%%%%%%%%%%%%% END THEOREM 
%%%%%%%%%%%%%%%%% PROOF  
\begin{proof}
Let us construct the matrix $A$ with columns labeled by the Dicke states labeled by $Q_u$ and rows labeled by AD errors of weight from 1 to $t$,
 where $Q = P(w)$ and $u =  t+1$. 
 
 We first show that $d(\widetilde Q_u) \ge 2(t+1)$, so that the distance criterion in Theorem \ref{thm:nullity-2-code} holds.
One can see this for the following reason. The minimum distance of any set of non-negative vectors of fixed length is trivially at least 2. 
Hence the minimum distance of $\widetilde Q$ is at least $2u$.
The minimum distance between vectors from $\widetilde Q$ and the ones vector is obviously 
$w(u-1) + (uw-w) = 2w(u-1)$ which is at least $2u$ whenever $w \ge 2$. 
 
 Let us denote the rank of $A$ and the nullity of $A$ by rank($A$) and nullity($A$) respectively.
Now the rank of $A$ is equal to its row rank, which is the number of its linearly independent rows. 
We will see that the matrix $A$ in fact has many linearly dependent rows, and hence its row rank is strictly less than the number of its rows. 
 More precisely, the sets of its rows which correct $\kappa$ AD errors for $\kappa =1,\dots, t$ are linearly dependent according to Lemma \ref{lem:linear-dependence}. 
 The case for $\kappa = 1$ is trivial, because there is only one row of $A$ that corresponds to $\kappa= 1$.
 Now define $L_\kappa = \{\overline p(\kappa-1) + 1, \dots , \overline p(\kappa) \}$.
 When $\kappa \ge 2$, the sets $L_\kappa$ of labels for the dependent row vectors of $A$ have cardinality at least two, and we can eliminate one dependent row from each $L_\kappa$, which leads to an elimination of $t-1$ rows.
According to Proposition \ref{prop}, the row of $A$ where $h$ AD errors afflict exactly $h$ modes is linearly dependent with the rows in $A$ that afflict $w$ AD errors in at least $h$ modes, for every $w > h$.
By setting $h=2,3,\dots$, we can eliminate another $t-2, t-3, \dots$ rows. The total number of linearly dependent rows is thus at least $1+\dots + (t-1) = \binom t 2$.
 
Now the dimension of the domain of $A$ is the number of its columns, which is $p(w)+1$. 
The rank-nullity theorem states that the nullity of $A$ is precisely $p(w)+1-{\rm rank}(A)$.
From an upper bound of the rank of $A$, we can obtain a lower bound on the nullity of $A$. 
We have seen from the previous paragraph that the rank of $A$ is at most the number of its rows minus $\binom t 2$.
Hence the rank-nullity theorem implies that nullity($A) \ge p(w) +1 - (\overline p(t) -\binom t 2 )$.
It follows that for the nullity of $A$ to be at least 1, it suffices to require
$p(w) +1 - (\overline p(t) - \binom t 2  ) \ge 1$. 
Theorem \ref{thm:nullity-2-code} then implies that we can use $A$ to construct a PI constant-excitation quantum code that corrects $t$ AD errors.
\end{proof}
%%%%%%%%%%%%%%%%% END PROOF 
\subsection{Proof of Theorem \ref{thm:nullity-2-code} }
%%%%%%%%%%%%%%%%% LEMMA 
First we show that the non-deformation condition with respect to the AD error of weight zero holds.
Notice that for any quantum state $|\psi\>$ that is a superposition of computational basis states each of weight $k$, 
Proposition \ref{prop:A-dag-A} implies that $\<\psi| (A_0^{\otimes n}) ^\dagger  A_0^ {\otimes n}|\psi\>  
= (1-\gamma)^k .$
Thus the non-deformation condition for the AD error of weight zero trivially holds.

Now we will demonstrate that the orthogonality conditions of the KL quantum error correction criterion are satisfied because of the distance criterion imposed.
While this has been proved in \cite{CLY97}[Theorem 2], 
we briefly state the underlying reason for this.
An AD error of weight $\kappa$ changes the weight of computation basis states by $\kappa$. The KL quantum error correction criterion involves taking the inner product of states both afflicted by AD errors of weight at most $t$. Thus, if the vectors underlying the computational basis states form a set of distance at least $2t+1$, 
all the orthogonal quantum error correction criterions will hold. 
Moreover, using the simple fact that $x_i = x_i^+ - x_i^-$, the non-deformation quantum error correction criterion for AD errors of weight from 1 to $t$
with respect to the code (\ref{eq:code-defi}) will be equivalent to the constraints
\begin{align}
\sum_{j=1}^c	x_j \< \widetilde{{\bf q}_j } |  A_{\tau_i, n}  ^\dagger A_{\tau_i ,n} | \widetilde{{\bf q}_j } \> 
&=
\<0_L|  A_{\tau_i, n}  ^\dagger A_{\tau_i, n} |0_L\> 
-
\<1_L|  A_{\tau_i, n}  ^\dagger A_{\tau_i, n} |1_L\>  = 0
\label{eq:proof-no-deform}.
\end{align}
Clearly these constraints are equivalent to the system of linear equations $A {\bf x} = {\bf 0}$.
But we still have to show is that there is a non-zero {\bf x} such that $x_1 + \dots +x_c = 0$ and  $A {\bf x} = {\bf 0}$.

We first show that if $A{\bf x} = 0$ has a non-trivial solution for ${\bf x}$, then $x_1+ \dots + x_c = 0$.
To see this, note that Lemma \ref{lem:linear-dependence} implies that the first row of $A$ is a vector of ones.
Hence $A{\bf x} = 0$ implies that 
${\bf 1}_{c}^T {\bf x} = 0$ which implies that $x_1 + \dots + x_c = 0$.
Therefore if the nullity of $A$ is at least one, the code as defined by (\ref{eq:code-defi}) exists and 
the non-deformation quantum error correction criterions for AD errors of weights from 1 to $t$ hold.
Since we have argued in the previous paragraph how all the orthogonality quantum error correction criterions hold and the non-deformation quantum error correction criterion for the AD of weight zero holds, 
all the KL quantum error correction criterions hold,
and the code as defined by (\ref{eq:code-defi}) corrects $t$ AD errors. \qed

%%%%%%%%%%%%%%%%%%%%%%%%%%%%%%%%%%%%%%%%%%%%%%%%%
 \section{Explicit code constructions}
\setcounter{theorem}{0}
%%%%%%%%%%%%%%%%%%%%%%%%%%%%%%%%%%%%%%%%%%%%%%%%%%5
In this section, we demonstrate how one can make use of the results in the previous section to construct quantum codes.
We illustrate briefly a recipe in which quantum codes may be found.
Suppose first that we wish to construct a quantum code that corrects $t$ AD errors.
Then we will pick some integer $w$, and set $Q = \overline P(w)$,
so that $Q$ is the set of all integer partitions of $1,\dots,w$.
We will next construct the set $Q_u$ for a suitable choice of an integer $u$.
The basis states of our quantum codes are then labeled by the elements of $Q_u$. 
With $Q_u$ and integer partitions labeling the different types of AD errors, we can construct a matrix $A$ as given in (\ref{eq:aij}).
Then we define our quantum codes based on the vectors that we find in the nullspace of $A$.

\begin{example}[Constant energy code correcting 1 AD error {\cite{WaB07}}] \label{example-t=1}
Consider $t=1, w=1, Q = P(w)$ with $u = 3$, so that the number of modes is $n=uw=3$. 
Then $Q_u = \{(3,0,0),(1,1,1)\}$. Obviously $d(\widetilde Q_u) = 4 \ge 2t+1$.
\begin{align}
A &= \begin{pmatrix}
a_{1,1} & a_{1,2} \\
\end{pmatrix}
\end{align}
where
\begin{align}
a_{1,1} 
&= \<\widetilde{(3,0,0)}|A_{(1,0,0)} ^\dagger A_{(1,0,0)} | \widetilde{(3,0,0)} \>
 / (\gamma (1-\gamma)^2)
\notag\\ 
&=
\frac{1}{3} \<(3,0,0)|A_{(1,0,0)} ^\dagger A_{(1,0,0)} | (3,0,0) \>  / (\gamma (1-\gamma)^2).
\end{align}
Note that 
\begin{align}
a_{1,2} &= \< (1,1,1) |A_{(1,0,0)} ^\dagger A_{(1,0,0)} | (1,1,1) \>   / (\gamma (1-\gamma)^2)  =1 .
\end{align} 
Hence $
A =  
\begin{pmatrix}
1 & 1 \\
\end{pmatrix}$.
Note that we can obtain the same result for $A$ from (\ref{eq:rowsum}) because since the number of ways to permute $(1,0,0)$ is 3, $3A = 3 {\bf 1}_2 ^T.$
The vector $x= (1,-1)$ clearly lies within the nullspace of $A$, and hence we can derive from 
(\ref{eq:code-defi}) the quantum code spanned by
\begin{align}
|0_L\> &= \frac{1}{ \sqrt  3  } \left(
|(3,0,0)\> + |(0,3,0)\>  + |(0,0,3)\>
\right)  \label{eq:t=1-L0} \\
|1_L\> &=  |(1,1,1)\>. \label{eq:t=1-L1}
\end{align}
Since all of the requirements of Theorem \ref{thm:nullity-2-code} are satisfied for $t=1$, the code spanned by (\ref{eq:t=1-L0}) and (\ref{eq:t=1-L1}) is a constant energy code which also corrects 1 AD error and which is permutation-invariant. 
This is also precisely Wasilewski and Banaszek's 3 mode code \cite{WaB07}.
\end{example}

\begin{example}[Constant energy code correcting 2 AD errors]\label{example-t=2}
Consider $t=2, w=2, Q = P(w)$ with $u = t+1 = 3$, so that the number of modes is $n=uw=6$. 
Then 
\begin{align}
Q_u = \{(6,0,0,0,0,0),(3,3,0,0,0,0),(1,1,1,1,1,1)\}.
\end{align}
Obviously $d(\widetilde Q_u) = 6 \ge 2t+1$. Also,
\begin{align}
A &= \begin{pmatrix}
1 & 1 & 1 \\
\frac 5 2 & 1 & 0 \\
0 & \frac 3 5 & 1 \\
\end{pmatrix}.
\end{align}
Now note that $A (\frac 2 5 , -1 , \frac 3 5) = 0,$ and hence we can derive from 
(\ref{eq:code-defi}) the quantum code spanned by
\begin{align}
|0_L\> &= \sqrt{\frac{2}{5}} 
|\widetilde{(6,0,0,0,0,0)}\>
+ \sqrt{\frac{3}{5}} |1\> ^{\otimes 6} 
,
\label{eq:t=2-L0} \\
|1_L\> &=  |\widetilde{(3,3,0,0,0,0)}\>. \label{eq:t=2-L1}
\end{align}
\end{example}

\begin{example}[Constant energy code correcting 3 AD errors with 12 excitations]\label{example-t=3,N=12}
Consider $t=3, w=3, Q = P(w)$ with $u = t+1 = 4$, so that the number of modes is $n=uw=12$. 
Then 
\begin{align}
Q_u = \{ 
( (12) | {\bf 0}_{11} ),
( (8,4)| {\bf 0}_{10} ),
( (4,4,4)| {\bf 0}_{9} ),
{\bf 1}_{12} 
\}.
\end{align}
Obviously $d(\widetilde Q_u) = 8 \ge 2t+1$. 
We now proceed to evaluate the matrix elements of $A$.
The first row of $A$ is a vectors of ones.
By considering only the matrix elements of $A_2^\dag A_2$,
the second row of $A$ is equal to 
\[
r_2 =\left( 
\frac{\binom {12} 2}{12} ,
 \frac{ 11 \binom{8}{2}+11\binom{4}{2}  }{2 \binom {12}{2} },
 \frac{  \binom{11}{2} \binom{4}{2}  }{ \binom {12}{3} },
0
  \right)^T=
  \left( \frac{11}{2}, \frac{17}{6}, \frac{3}{2}, 0    \right)^T
  .\]
By considering only the matrix elements of $ A_1^\dag A_1 \otimes A_1^\dag A_1$,
the third row of $A$ is equal to 
\[r_3 = \left( 
0 ,
 \frac{  2 \binom{8}{1} \binom{4}{1}  }{2 \binom {12}{2} },
 \frac{  \binom{10}{1} 4^2  }{ \binom {16}{3} },
 1
  \right)^T \notag\\
=
\left(0,\frac{16}{33},\frac{8}{11},1\right)^T .
  \]
As one can see, the first, second row and the third row are linearly dependent because $ \binom{12}{1} r_2 + \binom{12}{2} r_3 = \binom{12}{2} {\bf 1}_{12}$ as implied by Lemma \ref{lem:linear-dependence}.

Next we proceed to evaluate the fourth, fifth and sixth rows of $A$.
By considering only the matrix elements of $ A_3^\dag A_3$,
the fourth row of $A$ is equal to 
\[r_4= \left( 
\frac{\binom {12} 3}{12} ,
 \frac{ 11 \binom{8}{3}+11\binom{4}{3}  }{2 \binom {12}{2} },
 \frac{  \binom{11}{2} \binom{4}{3}  }{ \binom {12}{3} },
0
  \right)^T=
  \left(55/3,5,1,0\right)^T.
  \]
By considering only the matrix elements of $ A_2^\dag A_2 \otimes A_1^\dag A_1$,
the fifth row of $A$ is equal to 
\[r_5 = \left( 
0 ,
 \frac{  \binom{8}{2} \binom{4}{1}  + \binom{8}{1} \binom{4}{2} }{2 \binom {12}{2} },
 \frac{  10 \binom{ 4}{ 2} \binom 4 1  }{ \binom {12}{3} },
 0
  \right)^T \notag\\
=
\left(0,\frac{40}{33},\frac{12}{11},0\right)^T .
  \]
By considering only the matrix elements of $ \left(A_1^\dag A_1 \right) ^{\otimes  3}$,
the sixth row of $A$ is equal to 
\[r_6 = \left( 
0 ,
0,
 \frac{  \binom 4 1 ^3 }{ 3\binom {16}{3} },
 1
  \right)^T \notag\\
=
\left(0,0,\frac{16}{55},1\right)^T .
  \]
Clearly, we have $\binom {12}{1} r_4 + 2\binom{12}{2} r_5+ \binom{12}{3} r_6  = \binom{12}{3} {\bf 1}_{12}^T$, and hence the first, fourth, fifth and sixth rows are linearly dependent. We now get
\begin{align}
A &= \begin{pmatrix}
1 & 1 & 1 & 1 \\
\frac{11}{2} &\frac{17}{6}&  \frac{3}{2} &  0 \\
0&  \frac {16}{33} & \frac 8 {11} &   1 \\
 55/3 & 5 &   1 &0 \\
0 &\frac{40}{33} & \frac{12}{11} & 0 \\
 0 & 0  &\frac{16}{55} & 1 \\ 
\end{pmatrix}.
\end{align}
The matrix rank of $A$ is 3,
and the null space of $A$ is spanned by
$(-21/32, 99/32, -55/16, 1)$.
Thus we have $A (-21, 99, -110 , 32) = 0$. 
From this we can derive from 
(\ref{eq:code-defi}) the quantum code spanned by
\begin{align}
|0_L\> &= 
\frac{1}{\sqrt{131}}
\left(
\sqrt{99} |\widetilde{((8,4)| {\bf 0}_{10} )}\>
+ 
\sqrt{32} |1\> ^{\otimes 12}
\right),
\label{eq:t=3-L0-N12} \\
|1_L\> &=  
\frac{1}{\sqrt{131}}
\left(
\sqrt{21} |\widetilde{( (12) | {\bf 0}_{11} )}\>
+ 
\sqrt{110} |\widetilde{( (4,4,4)| {\bf 0}_{9} )}\>
\right)
. \label{eq:t=3-L1-N12}
\end{align}
\end{example}

\begin{example}[Constant energy code correcting 4 AD errors] \label{example-t=4,N=20}
Consider $t=4, w=4, Q = P(w)$ with $u = t+1 = 5$, so that the number of modes is $n=uw=20$. 
The Dicke states are specified by
\begin{align}
Q_u = \{ 
( (20) | {\bf 0}_{19} ),
( (15,5)| {\bf 0}_{18} ),
( (10,10)| {\bf 0}_{18} ),
( (10,5,5)| {\bf 0}_{17} ),
( (5,5,5,5)| {\bf 0}_{16} ),
{\bf 1}_{20}  \}.
\end{align}
Then the matrix $A$ is given by 
\begin{align}
A = \left(\begin{array}{cccccc} 1 & 1 & 1 & 1 & 1 & 1\\ \frac{19}{2} & \frac{23}{4} & \frac{9}{2} & \frac{13}{4} & 2 & 0\\ 0 & \frac{15}{38} & \frac{10}{19} & \frac{25}{38} & \frac{15}{19} & 1\\ 57 & \frac{93}{4} & 12 & 7 & 2 & 0\\ 0 & \frac{135}{76} & \frac{45}{19} & \frac{75}{38} & \frac{30}{19} & 0\\ 0 & 0 & 0 & \frac{25}{114} & \frac{25}{57} & 1\\ \frac{969}{4} & \frac{137}{2} & 21 & 11 & 1 & 0\\ 0 & \frac{485}{76} & \frac{120}{19} & \frac{75}{19} & \frac{30}{19} & 0\\ 0 & \frac{105}{19} & \frac{405}{38} & \frac{100}{19} & \frac{60}{19} & 0\\ 0 & 0 & 0 & \frac{425}{684} & \frac{50}{57} & 0\\ 0 & 0 & 0 & 0 & \frac{125}{969} & 1 \end{array}\right),
\end{align}
with rank 5 and nullity 1.
The nullspace is spanned by $\left(\begin{array}{cccccc} \frac{84}{125} & - \frac{456}{125} & - \frac{152}{125} & \frac{1368}{125} & - \frac{969}{125} & 1 \end{array}\right)$.
From this we can derive from 
(\ref{eq:code-defi}) the quantum code spanned by
\begin{align}
|0_L\> &= 
\frac{1}{\sqrt{1577}}
\left(
\sqrt{84} |\widetilde{((20)| {\bf 0}_{19} )}\>
+ 
\sqrt{1368} |\widetilde{((10,5,5)| {\bf 0}_{17} )}\>
+
\sqrt{125} |1\> ^{\otimes 20}
\right),
\label{eq:t=4-L0-N20} \\
|1_L\> &=  
\frac{1}{\sqrt{1577}}
\left(
\sqrt{456} |\widetilde{( (15,5) | {\bf 0}_{18} )}\>
+ 
\sqrt{152} |\widetilde{( (10,10) | {\bf 0}_{18} )}\>
+ 
\sqrt{969} |\widetilde{( (5,5,5,5) | {\bf 0}_{16} )}\>
\right)
. \label{eq:t=4-L1-N20}
\end{align}
\end{example}

\begin{example}[Constant energy code correcting 5 AD errors] \label{example-t=5,N=30}
Consider $t=5, w=5, Q = P(w)$ with $u = t+1 = 6$, so that the number of modes is $n=uw=30$. 
The Dicke states are specified by
\begin{align}
Q_u = \{ &
( (30) | {\bf 0}_{30} ),
( (24,6)| {\bf 0}_{28} ),
( (18,12)| {\bf 0}_{28} ),
( (18,6,6)| {\bf 0}_{27} ),\notag\\
&\quad
( (12,12,6)| {\bf 0}_{27} ),
( (12,6,6,6)| {\bf 0}_{26} ),
( (6,6,6,6,6)| {\bf 0}_{26} ),
{\bf 1}_{30}  \}.
\end{align}
Then the matrix $A$ is given by 
\begin{align}
A = \left(\begin{array}{cccccccc} 1 & 1 & 1 & 1 & 1 & 1 & 1 & 1\\ \frac{29}{2} & \frac{97}{10} & \frac{73}{10} & \frac{61}{10} & \frac{49}{10} & \frac{37}{10} & \frac{5}{2} & 0\\ 0 & \frac{48}{145} & \frac{72}{145} & \frac{84}{145} & \frac{96}{145} & \frac{108}{145} & \frac{24}{29} & 1\\ \frac{406}{3} & \frac{1022}{15} & \frac{518}{15} & \frac{428}{15} & \frac{46}{3} & \frac{28}{3} & \frac{10}{3} & 0\\ 0 & \frac{336}{145} & \frac{504}{145} & \frac{426}{145} & \frac{456}{145} & \frac{378}{145} & \frac{60}{29} & 0\\ 0 & 0 & 0 & \frac{162}{1015} & \frac{216}{1015} & \frac{54}{145} & \frac{108}{203} & 1\\ \frac{1827}{2} & \frac{3547}{10} & \frac{237}{2} & 103 & \frac{67}{2} & 18 & \frac{5}{2} & 0\\ 0 & \frac{2104}{145} & \frac{2292}{145} & \frac{1792}{145} & \frac{280}{29} & \frac{180}{29} & \frac{80}{29} & 0\\ 0 & \frac{276}{29} & \frac{3366}{145} & \frac{321}{29} & \frac{2112}{145} & \frac{243}{29} & \frac{150}{29} & 0\\ 0 & 0 & 0 & \frac{729}{1015} & \frac{972}{1015} & \frac{1269}{1015} & \frac{270}{203} & 0\\ 0 & 0 & 0 & 0 & 0 & \frac{96}{1015} & \frac{48}{203} & 1\\ \frac{23751}{5} & 1417 & 312 & 286 & 53 & 27 & 1 & 0\\ 0 & \frac{10686}{145} & \frac{1521}{29} & \frac{1248}{29} & \frac{606}{29} & \frac{333}{29} & \frac{60}{29} & 0\\ 0 & \frac{276}{29} & \frac{3366}{145} & \frac{321}{29} & \frac{2112}{145} & \frac{243}{29} & \frac{150}{29} & 0\\ 0 & \frac{1196}{29} & \frac{14586}{145} & \frac{1040}{29} & 44 & 18 & \frac{200}{29} & 0\\ 0 & 0 & 0 & \frac{1053}{406} & \frac{594}{145} & \frac{1593}{406} & \frac{675}{203} & 0\\ 0 & 0 & 0 & 0 & 0 & \frac{312}{1015} & \frac{120}{203} & 0\\ 0 & 0 & 0 & 0 & 0 & 0 & \frac{144}{2639} & 1 \end{array}\right)
,
\end{align}
with rank 7 and nullity 1.
The nullspace is spanned by 
\begin{align}
\left(\begin{array}{cccccccc} - \frac{21505}{31104} & \frac{135575}{31104} & \frac{39875}{15552} & - \frac{55825}{3888} & - \frac{25375}{2592} & \frac{5075}{144} & - \frac{2639}{144} & 1 \end{array}\right).\notag
\end{align}
From this we can derive from (\ref{eq:code-defi}) the quantum code spanned by
\begin{align}
|0_L\> = 
\frac{1}{\sqrt{1342629}}&
\left(
\sqrt{135575} |\widetilde{((24,6)| {\bf 0}_{28} )}\>
+
\sqrt{79750} |\widetilde{((18,12)| {\bf 0}_{28} )}\>
\right.
\notag	\\
&\left.\quad+
\sqrt{1096200} |\widetilde{((12,12,6)| {\bf 0}_{27} )}\>
+
\sqrt{31104} |1\> ^{\otimes 30}
\right),
\label{eq:t=5-L0-N30} \\
|1_L\> =  
\frac{1}{\sqrt{1342629}}&
\left(
\sqrt{21505} |\widetilde{((30)| {\bf 0}_{29} )}\>
+ 
\sqrt{446600} |\widetilde{( (18,6,6) | {\bf 0}_{27} )}\>\right.
\notag\\
&\left.\quad
+ 
\sqrt{304500} |\widetilde{( (12,12,6) | {\bf 0}_{27} )}\>
+ 
\sqrt{570024} |\widetilde{( (6,6,6,6,6) | {\bf 0}_{25} )}\>
\right)
. \label{eq:t=5-L1-N30}
\end{align}
\end{example}

To illustrate the fact that $A$ can potentially have a nullspace larger than 1, we consider in the following a 3 AD quantum code with 16 excitations.

\begin{example}[Constant energy code correcting 3 AD errors] \label{example-t=3,N=16}
Consider $t=3, w=4, Q = P(w)$ with $u = t+1 = 4$, so that the number of modes is $n=uw=16$. 
Then 
\begin{align}
Q_u = \{ 
( (16) | {\bf 0}_{15} ),
( (12,4)| {\bf 0}_{14} ),
( (8,8)| {\bf 0}_{14} ),
( (8,4,4)| {\bf 0}_{13} ),
( (4,4,4,4)| {\bf 0}_{12} ),
{\bf 1}_{16} 
\}.
\end{align} 
Obviously $d(\widetilde Q_u) = 8 \ge 2t+1$. 
We now proceed to evaluate the matrix elements of $A$.
The first row of $A$ is a vectors of ones.
By considering only the matrix elements of $A_2^\dag A_2$,
the second row of $A$ is equal to 
\[
r_2 =\left( 
\frac{\binom {16} 2}{16} ,
 \frac{ 15 \binom{12}{2}+15\binom{4}{2}  }{2 \binom {16}{2} },
 \frac{  15 \binom{8}{2}  }{ \binom {16}{2} },
 \frac{  \binom{15}{2} \binom{8}{2} + 2 \binom{15}{2} \binom{4}{2}  }{ 3\binom {16}{3} },
 \frac{  \binom{15}{3} \binom{4}{2}  }{ \binom {16}{4} },
0
  \right)^T=
  \left(\frac{15}{2},\frac{9}{2},\frac{7}{2},\frac{5}{2},\
  \frac{3}{2},0\right)^T
  .\]
By considering only the matrix elements of $ A_1^\dag A_1 \otimes A_1^\dag A_1$,
the third row of $A$ is equal to 
\[r_3 = \left( 
0 ,
 \frac{  2 \binom{12}{1} \binom{4}{1}  }{2 \binom {16}{2} },
 \frac{  \binom{8}{1}\binom 8 1  }{ \binom {16}{2} },
 \frac{ 2 \binom{8}{1}\binom 4 1 (14)  + \binom 4 1 \binom 4 1 (14) }{ 3\binom {16}{3} },
 \frac{  \binom{4}{1}\binom 4 1 \binom{14}{2}  }{ \binom {16}{4} },
 1
  \right)^T \notag\\
=
\left(0,\frac{2}{5},\frac{8}{15},\frac{2}{3},\frac{4}{5},1\right)^T .
  \]
As one can see, the first, second row and the third row are linearly dependent because $ \binom{16}{1} r_2 + \binom{16}{2} r_3 = \binom{16}{2} {\bf 1}_{16}$ as implied by Lemma \ref{lem:linear-dependence}.

Next we proceed to evaluate the fourth, fifth and sixth rows of $A$.
By considering only the matrix elements of $ A_3^\dag A_3$,
the fourth row of $A$ is equal to 
\[r_4= \left( 
\frac{\binom {16} 3}{16} ,
 \frac{ 15 \binom{12}{3}+15\binom{4}{3}  }{2 \binom {16}{2} },
 \frac{  15 \binom{8}{3}  }{ \binom {16}{3} },
 \frac{  \binom{15}{2} \binom{8}{3} + 2 \binom{15}{2} \binom{4}{3}  }{ 3\binom {16}{3} },
 \frac{  \binom{15}{3} \binom{4}{3}  }{ \binom {16}{4} },
0
  \right)^T=
  \left(35,14,7,4,1,0\right)^T
  \]
By considering only the matrix elements of $ A_2^\dag A_2 \otimes A_1^\dag A_1$,
the fifth row of $A$ is equal to 
\[r_5 = \left( 
0 ,
 \frac{  \binom{12}{2} \binom{4}{1}  + \binom{12}{1} \binom{4}{2} }{2 \binom {16}{2} },
 \frac{  \binom{8}{2} \binom 8 1  }{ \binom {16}{2} },
 \frac{  \left(\binom{8}{2}\binom 4 1 + \binom 8 1 \binom  4 2 \right) (14)  
 + \binom 4 2 \binom 4 1 (14) }{ 3\binom {16}{3} },
 \frac{  \binom{4}{2}\binom 4 1 \binom{14}{2}  }{ \binom {16}{4} },
 0
  \right)^T \notag\\
=
\left(0,\frac{7}{5},\frac{28}{15},\frac{23}{15},\frac{6}
   {5},0\right)^T .
  \]
By considering only the matrix elements of $ \left(A_1^\dag A_1 \right) ^{\otimes  3}$,
the sixth row of $A$ is equal to 
\[r_6 = \left( 
0 ,
0,
0,
 \frac{ 3 \binom{8}{1}\binom 4 1 \binom 4 1}{ 3\binom {16}{3} },
 \frac{  \binom{4}{1}^3  \binom{13}{1}  }{ \binom {16}{4} },
 1
  \right)^T \notag\\
=
\left(0,0,0,\frac{8}{35},\frac{16}{35},1\right)^T .
  \]
Clearly, we have $\binom {16}{1} r_4 + 2\binom{16}{2} r_5+ \binom{16}{3} r_6  = \binom{16}{3} {\bf 1}_{16}^T$, and hence the first, fourth, fifth and sixth rows are linearly dependent. We now get
\begin{align}
A &= \begin{pmatrix}
1 & 1 & 1 & 1 & 1 & 1\\
\frac{15}{2} &\frac{9}{2}&  \frac{7}{2}&  \frac 5 2& \frac 3 2 &  0 \\
0&  \frac 2 5& \frac 8 {15}& \frac 2 3 & \frac 4 5 & 1 \\
 35 & 14 & 7 & 4 & 1 &0 \\
0 & \frac{7}{5} & \frac{28}{15} & \frac{23}{15} & \frac{6}{5} &0  \\
 0 & 0 & 0 &\frac{8}{35} & \frac{16}{35} & 1 \\ 
\end{pmatrix}.
\end{align}
The matrix rank of $A$ is 3,
and the null space of $A$ is spanned by
\begin{align}
\begin{pmatrix}
-17 / 12 \\ 115 / 24 \\ 0 \\ -35 / 8 \\ 0 \\ 1 \\
\end{pmatrix}
,
\begin{pmatrix}
-1 / 3\\ 4 / 3\\ 0\\ -2\\ 1 \\ 0 \\ 
\end{pmatrix}
,
\begin{pmatrix}
1 / 3 \\ -4 / 3 \\ 1 \\ 0\\ 0\\ 0\\
\end{pmatrix}.
\end{align}
Thus we trivially have for instance 
$A (1/3, -4/3, 1, 0, 0, 0) = 0$. 
From this we can derive from 
(\ref{eq:code-defi}) the quantum code spanned by
\begin{align}
|0_L\> &= 
\sqrt{\frac{1}{4}} 
|\widetilde{((16)| {\bf 0}_{15} )}\>
+ \sqrt{\frac{3}{4}} |\widetilde{((8,4,4)| {\bf 0}_{13} )}\>
,
\label{eq:t=3-L0} \\
|1_L\> &=  |\widetilde{((12,4)| {\bf 0}_{14} )}\>
. \label{eq:t=3-L1}
\end{align}
\end{example}

In Table \ref{mytable}, we present parameters constant-excitation permutation-invariant quantum codes that can be constructed using our methodology for $t=1, \dots, 10$.
 \begin{table}[]
\centering
\label{mytable}
\begin{tabular}{|l|l|l|}
\hline
\multicolumn{1}{|l|}{$t$} & \multicolumn{1}{l|}{$N$}& \multicolumn{1}{l|}{$(t+1)^2$} \\
 \hline
1                       &    3      & 4  \\
2                       &    6      & 9  \\
3                       &   12     & 16 \\
4                       &   20     & 25   \\ 
5                       &   30     & 36   \\ 
6                       &   $49^{*}$   & 49   \\ 
7                       &   $72^{*}$    & 64   \\ 
8                       &   $90^{*}$   & 81   \\ 
9                       &   $120^{*}$  & 100  \\ 
10                     &   $143^{*}$   & 121 \\ 
\hline
\end{tabular}
\caption{Table of code parameters. The first column are values for $t$, the number of AD errors the quantum code can correct. The second column, $N$ is the total excitation number for our constant-excitation quantum code that corrects $t$ AD errors and is permutation-invariant. 
The third column is $(t+1)^2$, which is the total excitation number of Bergmann and van Loock's codes \cite{BvL16}, which are not permutation-invariant.
The numbers for $N$ marked with an asterisk are obtained from \eqref{eq:inequality} and are likely not to be smallest possible.
On the other hand, the numbers for $N$ marked without an asterisk have their codes given explicitly in the examples we provided.}
\end{table}
 
   %%%%%%%%%%%%% figure %
 \begin{figure}
  \includegraphics[width=\textwidth]{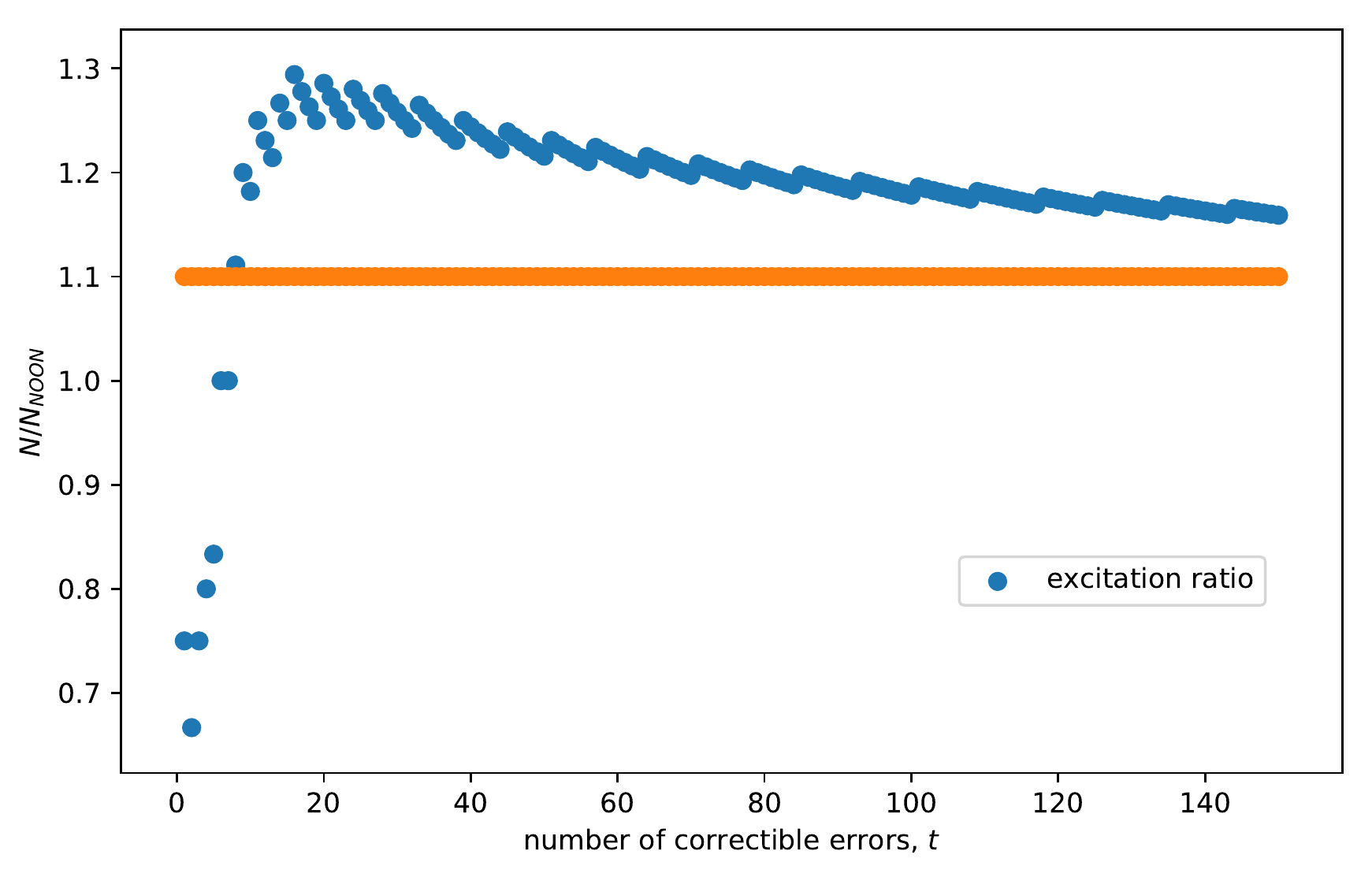}
  \caption{
We denote as $N_{NOON} = (t+1)^2$ and $N$ as the number of excitations needed to correct $t$ AD errors using 
the NOON codes of  Bergmann and van Loock codes \cite{BvL16} and our permutation-invariant constant-excitation codes respectively. The first 5 data points corresponding to our explicit code constructions outperform the NOON codes, and the remaining data points suggest that $N \le 1.3(t+1)^2$. \label{fig:bounds} }
\end{figure}
%%%%%%%%%%%%%%%%%%%%%%%%%%%%%%%%%%% 

%%%%%%%%%%%%%%%%%%%%%%%%%%%%%%%%%%% 
\section{Discussions} \label{sec:discussions}
%%%%%%%%%%%%%%%%%%%%%%%%%%%%%%%%%%% 
In this paper, we study codes that lie within the decoherence-free subspace of certain Hamiltonians, while also exhibiting the ability to reverse the effects of some amplitude damping errors. 
We focus on the Hamiltonian that is a sum of quantum harmonic oscillators of identical frequencies, because it can describe the quantum bus used by superconducting qubits. 
Since permutations may unexpectedly occur, 
it is also advantageous for quantum codes to also exhibit permutation-invariance.
Here in this paper, we present a method where vectors from the nullspace of a matrix $A$ can give rise to constant-excitation PI codes that correct AD errors, which are naturally immune to the natural dynamics of both the quantum harmonic oscillators and also arbitrary permutations.

To label the bases on which our codes are constructed on, we have used the all ones vector along with the set $Q=P(w)$. 
One might wonder if one could construct quantum codes using our method with $Q$ a strict subset of $P(w)$. 
This is useful if certain Dicke states are unphysical to implement for example. One can see from 
Example \ref{example-t=3,N=16} that if the initial $Q$ is chosen as a suitable strict subset of $P(w)$, the derived quantum code would be exactly the same. In particular, we supply Example \ref{example-t=3,N=16} to illustrate the fact that by choosing $N$ to be larger than the minimum required to correct $t$ errors, we can have some flexibility in choosing which Dicke states our code is to be supported on.

The codes considered here are non-degenerate. The effect of different AD errors on the permutation-invariant code are distinct. To see how this happens explicitly, let us consider the example of Wasilewski and Banaszek's excitation 3 code.
Consider the AD errors $A_{(1,0,0)}$ and $A_{(0,0,1)}$.
Notice that
\begin{align}
A_{(1,0,0)}|0_L\> = \frac{1}{\sqrt 3} \sqrt 3|(2,0,0)\> \sqrt{\gamma (1-\gamma)^2}  \label{q1}\\
A_{(0,0,1)}|0_L\> = \frac{1}{\sqrt 3} \sqrt 3|(0,0,2)\> \sqrt{\gamma (1-\gamma)^2} \label{q2}.
\end{align} 
One can see that the effect of $A_{(1,0,0)}$ and $A_{(0,0,1)}$ on the logical zero codeword is not the same because 
\eqref{q1} is not equal to \eqref{q2}.

Regarding the literature on constant-excitation quantum codes that are not necessarily PI,
we have also improved on their construction when 2,3,4 and 5 AD errors are to be corrected. 
Namely, our codes require only 6,12,20 and 30 total excitations to correct 2,3,4 and 5 AD errors respectively.
In contrast, Chuang, Leung and Yamamoto constructed codes correcting 2 and 3 AD errors using 9 and 16 total excitations \cite{CLY97},
while Bergmann and van Loock constructed codes correcting $t$ AD errors using $(t+1)^2$ total excitations \cite{BvL16}.
We construct the best constant-excitation quantum codes that can correct between 2 to 5 AD errors, as the number of excitations needed to correct $t$ errors for $t=2,3,4,5$ is $t(t+1)$, 
which is less than the $(t+1)^2$ total excitations previously needed.
Explicit code constructions using our method are supplied in Example \ref{example-t=2} and Example \ref{example-t=3,N=12} to correct 2 and 3 AD errors using 6 and 12 total excitations respectively.
Our construction can also be seen as a generalization of Wasilewski and Banaszek's constant-excitation PI code correcting 1 AD error \cite{WaB07} to PI constant-excitation quantum codes that can correct an arbitrary number of AD errors.
Asymptotically, it also appears that the number of excitations needed to correct $t$ AD errors for our codes exhibits the same behavior as Bergmann and van Loock's codes, because both require $O(t^2)$ excitations (see Figure \ref{fig:bounds}).

Bosonic codes with constant excitation number were previously c+onsidered by the authors in \cite{CLY97}, where they established some fundamental properties of such codes. 
While the quantum codes considered here are indeed a subfamily of the quantum codes considered in \cite{CLY97}, they do not follow trivially from \cite{CLY97} for the following reasons.
First, in \cite{CLY97}, quantum codes satisfying the non-deformation Knill-Laflamme quantum error correction criterion and certain orthogonality conditions are constructed, with explicit construction algorithms for 1 and 2 AD errors.
However for at least 3 AD errors, one could only rely on brute force search as there was no systematic way to generate such constant-excitation quantum codes.
In contrast, we see in this paper how constant excitation quantum codes for any number of AD errors can be generated.
Second, to numerically find codes in \cite{CLY97}, the authors check if a system of linear equations corresponding to the non-deformation conditions is satisfied. In the language used in this paper, they essentially determine the nullity of the matrix $A$, but it was unclear how one could construct the quantum code using this information. 
We fill this research gap in this paper by demonstrating how the code can be explicitly constructed from the nullspace of the matrix $A$, therefore enriching the theory of this nascent field.
Third, in \cite[Eq. (5.3)]{CLY97}, the authors give an inequality that is asymptotically equivalent to ours in (1.2), which tells us when constant excitation codes exist. However, since arbitrary constant excitation codes were considered, it is {\em a priori} unclear if the inequality \cite[Eq. (5.3)]{CLY97} holds when the additional constraint of permutation-invariance on the codes is imposed. Here, we show that this is in fact possible, and slightly improve their inequality. While their inequality states that constant excitation codes exist when
\begin{align}
p\left(\frac{N}{t+1}\right) \ge ( p(0) + p(1) + \dots p(t) ) + 2, 
\end{align}
we say that permutation-invariant constant excitation codes exist when
\begin{align}
p\left(\frac{N}{t+1}\right) + \binom t 2 \ge (  p(1) + \dots p(t) ).
\end{align}
Just like in the case of \cite{CLY97}, this analytical bound is not tight, as we demonstrate for small values of $t$.

One limitation of this paper is that our constructed PI constant-excitation quantum codes only correct against AD errors. It would be advantageous to study when our codes can also correct against a fixed number of arbitrary errors. 
Another limitation is that we have only provided a theoretical structure of our PI constant-excitation quantum codes; the practicality of implementing our codes has yet to be fully addressed. 
We expect that techniques in preparation of Dicke states and quantum cellular automata to be useful with regards to this issue. However, this lies beyond the scope of the current paper, where our focus lies primarily only on the mathematical structure of PI constant-excitation quantum codes that correct AD errors.

In summary, we prove the existence of constant-excitation quantum codes that not only correct any number of AD errors, but are invariant under any permutations. 
When certain distance criterion are satisfied, we also provide a new method for obtaining quantum codes  by finding vectors that lie within the nullspace of a matrix.

\section{Acknowledgments}
The author thanks the referees for valuable comments.
The author acknowledges support from the National Research Foundation and Ministry of Education, Singapore. This material is based on research funded in part by the Singapore National Research Foundation under NRF Award NRF-NRFF2013-01 and the U.S. Air Force Office of Scientific Research under AOARD grant FA2386-18-1-4003.
This work was supported by the EPSRC (Grant No. EP/M024261/1).
This work was also supported by the QCDA project (Grant No. EP/R043825/1)) which has received funding from the QuantERA ERANET Cofund in Quantum Technologies implemented within the European Union’s Horizon 2020 Programme.

\bibliography{mybib}{}
%\addcontentsline{toc}{chapter}{Bibliography}
\bibliographystyle{ieeetr}

\end{document}